\documentclass{eptcs}
\usepackage{breakurl}             

\usepackage{color,amssymb,amsmath,amsfonts,wasysym,mathtools}
\usepackage{hyperref}
\hypersetup{colorlinks, linkcolor=blue}
\usepackage{array, multirow}
\DeclareSymbolFont{symbolsC}{U}{txsyc}{m}{n}
\DeclareMathSymbol{\boxright}{\mathrel}{symbolsC}{128}

\usepackage{mylogicv01}
\standardlogic

\DeclareMathSymbol{\diamondright}{\mathrel}{symbolsC}{132}

\usepackage{booktabs}

\usepackage{verbatim}
\usepackage{stmaryrd}
\usepackage{mathtools}
\usepackage[noadjust]{cite}
\usepackage{bussproofs}
\newcommand{\counterfactual}{\ensuremath{%
  \Box\kern-1.5pt
  \raise1pt\hbox{$\mathord{\rightarrow}$}}}
\def\mathname#1{\ensuremath{\mathsf{#1}}}
\renewcommand\phi\varphi
\newcommand\lb\llbracket
\newcommand\rb\rrbracket
\def\prop{\mathname{Prop}}
\def\min{\mathname{Min}}

\def\AT{\mathname{CP}}
\def\CP{\mathname{CP}}
\def\NC{\mathname{NC}}
\def\ud{\mathname{UD}}
\def\MS{\mathname{MS}}

\def\MP{\mathname{\MP}}
\usepackage{tikz}
\usetikzlibrary{arrows,decorations.pathmorphing,backgrounds,positioning,fit,automata,
  matrix}
\newenvironment{configuration}{
\begin{tikzpicture}[>=stealth',
state/.style={rectangle, rounded corners=2mm, draw=black!75, text
  centered, inner sep=5pt},
state-empty/.style={circle, fill=black, inner sep=3pt}]}
{\end{tikzpicture}}
\tikzstyle{mybox} = [draw=black,  thick, rectangle, rounded corners, inner sep=5pt,]

\newtheorem{theorem}{Theorem}	
\newtheorem{lemma}{Lemma}

\newtheorem{corollary}{Corollary}
\newtheorem{definition}{Definition}
\newtheorem{fact}{Fact}

\newenvironment{proof}{\begin{trivlist}\item[]{\sc Proof:}\rm}{\hfill $\square$\end{trivlist}} 



\newcommand{\M}{\mathfrak{M}}

\newcommand{\F}{\mathcal{F}}

\title{Ceteris paribus logic in counterfactual reasoning}
\author{Patrick Girard
\institute{University of Auckland}
\email{p.girard@auckland.ac.nz}
\and
Marcus Anthony Triplett
\institute{University of Auckland}
\email{mtri285@aucklanduni.ac.nz}
}

\begin{document}
\maketitle

\begin{abstract}
The semantics for counterfactuals due to David Lewis has been
challenged on the basis of unlikely, or impossible, events. Such
events may skew a given similarity order in favour of those possible
worlds which exhibit them. By updating the relational
structure of a model according to a \emph{ceteris paribus} clause one
forces out, in a natural manner, those possible worlds which do not
satisfy the requirements of the clause. We develop a ceteris paribus
logic for counterfactual reasoning capable of performing such actions, and offer several alternative (relaxed) interpretations of \emph{ceteris paribus}. We apply this framework in a way which allows us to reason counterfactually without having our similarity order skewed by unlikely events. This continues the investigation of formal ceteris paribus reasoning, which has previously been applied to preferences \cite{RefWorks:2}, logics of game forms \cite{grossi2013ceteris}, and questions in decision-making \cite{xiong2011open}, among other areas \cite{proietti2010fitch}.
\end{abstract}

\section{Introduction}

The principal task of this paper is to work towards integrating \emph{ceteris paribus} modalities into conditional logics so that some dissonant analyses of counterfactuals may be reconciled. We also suggest that ceteris paribus clauses may be understood dynamically, in the sense of dynamic epistemic logic \cite{van2007dynamic}, and we interpret our resulting ceteris paribus logic accordingly. Ceteris paribus clauses implicitly qualify many conditional statements that formulate laws of science and economics. A ceteris paribus clause adds to a statement a proviso requiring that other variables or states of affairs not explicitly mentioned in the statement are kept constant, thus ruling out benign defeaters. For instance, Avogadro's law says that if the volume of some ideal gas increases then, everything else held equal, the number of moles of that gas increases proportionally. Varying the temperature or pressure could provide situations that violate the plain statement of the law, but the ceteris paribus clause accounts for those. It specifically isolates the interaction between volume and number of moles by keeping everything else equal. In the same spirit, the Nash equilibrium in game theory is a solution concept that picks strategy profiles in which none of the agents could unilaterally (i.e., keeping the actions of others constant, or equal) deviate to their own advantage.

We may understand a ceteris paribus clause as a linguistic device
intended to shrink the scope of the sentence qualified by the clause. For instance, when I make the utterance ``I prefer fish to beef, \emph{ceteris paribus}" I may mean something different from if I simply uttered ``I prefer fish to beef." By enforcing the ceteris paribus condition I rule out some situations which affect my preference. For example if, whenever I eat fish I'm beaten with a mallet, while whenever I eat beef I'm left in peace, I might retract the second utterance and maintain the first. The ceteris paribus clause reduces the number of states of affairs under consideration. For modal logicians, `ruling out' states of affairs amounts to strengthening an accessibility relation, consequently changing the relational structure of a model. This bears similarity to the epistemological forcing of Vincent Hendricks \cite{hendricks2006mainstream}, which seeks to rule out `irrelevant alternatives' in a way which allows knowledge in spite of the possibility of error. Wesley Holliday \cite{holliday2014epistemic} develops several interpretations of the epistemic operator $K$ based on the relevant alternatives epistemology; namely, that in order for an agent to have knowledge of a proposition, that agent must eliminate each \emph{relevant alternative}. Holliday's semantics are based on the semantics for counterfactuals due to David Lewis \cite{RefWorks:170}. One could see relevant worlds as those which keep things equal. When reasoning using Avogadro's law, the relevant possible worlds are those where the temperature and pressure have not changed. Thus, in order for an agent to have knowledge, that agent must eliminate the alternatives among the worlds which `keep things equal.'

Previously, ceteris paribus formalisms have been given for logics of preference \cite{RefWorks:2} and logics of game forms \cite{grossi2013ceteris}. Here we extend the analysis to counterfactual reasoning. The importance of counterfactuals in game theory is well known (see, for instance, \cite{samet1996hypothetical}). For example, Bassel Tarbush \cite{tarbush2013agreeing} argues that the \emph{Sure-Thing Principle}\footnote{An outcome $o$ of an action $A$ is a \emph{sure-thing} if, were any other action $A'$ to be chosen, $o$ would remain an outcome. The Sure-Thing Principle \cite{savage1972foundations} states that sure-things should not affect an agent's preferences.} ought to be understood as an inherently counterfactual notion. We will motivate our discussion by thinking through  Kit Fine's well-known `minor-miracles' argument \cite{fine1975review}, a putative counterexample to Lewis' semantics. We will argue that ceteris paribus logic, suitably adapted to conditionals, provides a natural response to this kind of argument. Moreover, we will see that ceteris paribus logic reveals a useful feature missing from the standard formalisation of counterfactuals; namely, the explicit requirement that certain propositions must have their truth remain fixed during the evaluation of the counterfactual. This is implicitly thought to hold, to some degree, when one works with models which have similarity orders or systems of spheres. The conditional logic of Graham Priest \cite{priest2008introduction} makes just that assumption, but with no syntactic assurance. Ceteris paribus logic provides, in addition to the underlying similarity order over possible worlds, a syntactic apparatus to reason with such ceteris paribus clauses directly in the object language. 

\section{Counterfactuals}

Here we shall formalise counterfactuals in the style of Lewis. Let $\prop$ be a set of propositional variables. 
We are concerned with models of the form $\M = (W, \preceq, V)$ such that the following obtain. \begin{enumerate}
\item $W$ is a non-empty set of \emph{possible worlds}.
\item $\preceq$ is a family $\{\preceq_w\}_{w\in W}$ of
  \emph{similarity orders}, i.e., relations on $W_w \times W_w$ (with $W_w \subseteq W$)  such that: 
\begin{itemize}
 \item $w \in W_w$, 
 \item $\preceq_w$ is reflexive,  transitive and total, and
 \item $w \prec_w v$ for all $v \in W_w\setminus \{w\}$.
\end{itemize} 
  
\item $V$ is a \emph{valuation function} assigning a subset $V(p)
  \subseteq W$ to each propositional variable $p\in \prop$.
\end{enumerate}
Intuitively, $W_w$ is the set of worlds which are entertainable from
$w$. Worlds which are not entertainable from $w$ are deemed simply too dissimilar from
$w$ to be considered.  Say that $u$ is at least as similar to $w$ as
$v$ is when $u \preceq_w v$, and that it is strictly more similar when
$u\prec_w v$.

If $\M$ satisfies each of the three requirements we call $\M$ a \emph{conditional model}. A relation $\le$ is said to be \emph{well-founded} if for every non-empty $S \subseteq W$ the set

\begin{align} \label{defmin}
\min^\M_\le(S) = \{v \in S \cap W : \text{ there is no $u$ with } u < v\}
\end{align}

\noindent is non-empty.\footnote{As usual, $u < v$ is defined as $u \le v$ and
  not $v \le u$} We will suppress the superscript $\M$ if it is clear
from the context which model we're discussing.  If a model $\M = (W, \preceq, V)$ has only
well-founded similarity orders we say that $\M$ satisfies the
\emph{limit assumption}. For ease of exposition, we will assume that our conditional models satisfy the limit assumption. Of course, we may generalise the semantics for counterfactuals in the usual way \cite{RefWorks:170}, so that our results work for models which do not satisfy the limit assumption as well.

\begin{definition}[Language $\mathcal{L}^\boxright$] The language $\mathcal{L}^\boxright$ of counterfactuals is given by the following grammar
\begin{align*} \phi \ ::= \ p \ | \ \lnot \phi \ | \ \phi \lor \psi \ | \ \phi \boxright \psi. \end{align*}
\end{definition}
We define $\phi \land \psi := \lnot(\lnot \phi \lor \lnot \psi)$, $\phi \rightarrow \psi := \lnot \phi \lor \psi$, $\phi \diamondright \psi := \lnot(\phi \boxright \lnot \psi)$.

\begin{definition}[Semantics] \label{defcfwf} Let $\M = (W, \preceq, V)$ be a well-founded conditional model. Then 

\begin{center}
\begin{tabular}{r c l}
$\lb p \rb^\M$ & = & $V(p)$ \\
$\lb \lnot \phi \rb^\M$ & = & $W \setminus \lb \phi \rb^\M$ \\
$\lb \phi \lor \psi \rb^\M$ & = & $\lb \phi \rb^\M \cup \lb \psi \rb^\M$ \\
$\lb \phi \boxright \psi \rb^\M$ & = & $\{w \in W : \min_{\preceq_w}(\lb \phi \rb^\M) \subseteq \lb \psi \rb^\M\}$.
\end{tabular}
\end{center}
\end{definition}
Let $w \in W$. If $w \in \lb \phi \rb$ we write $\M, w \models \phi$, and if $w \not \in \lb \phi \rb$ we write $\M, w \not \models \phi$.

\section{The Nixon argument}

There is a problem dating back to the 1970s \cite{fine1975review,
  bennett1974review, bowie1979similarity} surrounding the semantics
for counterfactuals proposed by Lewis. We have found that our `ceteris
paribus counterfactuals' (defined below) provide a unique perspective
on the problem (a putative counterexample). The argument goes as
follows. Assume, during the Cold War, that President Richard Nixon had
access to a device which launches a nuclear missile at the
Soviets. All Nixon is required to do is press a button on the
device. Consider the counterfactual \emph{if Nixon had pushed the
  button, there would have been a nuclear holocaust}. Call it  the \emph{Nixon couterfactual}. It is not so
difficult to see that the Nixon counterfactual could be true, or could be
imagined to be true. Indeed, one could argue that the Nixon counterfactual ought to be true in any successful theory of counterfactuals. Fine and Lewis both agree (and so do we) that the counterfactual is true
(\cite[p.~452]{fine1975review},
\cite[p.~468]{lewis1979counterfactual}), but Fine used the Nixon counterfactual to argue that
the Lewis semantics yields the wrong verdict. This is because ``a world with a single miracle but no
holocaust is closer to reality than one with a holocaust but no
miracle.'' \cite[p.~452]{fine1975review} In response, Lewis argues that, provided the Nixon situation is modelled using a similarity relation which respects a plausible system of priorities (see below), the counterfactual will emerge true. We will provide a different response using ceteris paribus counterfactuals, but first let us see how Fine and Lewis model the situation. 

Consider two classes of possible worlds. One class,
$\textbf{u}$, consists of those worlds in which Nixon pushes the
button, and the button successfully launches the missile. The second,
$\textbf{v}$, consists of those worlds in which Nixon pushes the
button, but some small occurrence -- such as a minor miracle --
prevents the button's correct operation. Certainly those worlds where
the button does \emph{not} launch the missile bear more similarity to
the present world than those where it does. This is Fine's
interpretation of Lewis' semantics. Any world in $\textbf{u}$ has
been devastated by nuclear warfare, countless lives have been lost,
there is nuclear winter, etc., whereas worlds in $\textbf{v}$
continue on as they would have done. 

To illustrate Fine's interpretation, let $p, s, m, h$ be the propositions:

\medskip 

\begin{center}
\begin{tabular}{l c l}
$p$ & = & ``Nixon \underline{p}ushes the button," \\
$s$ & = & ``the missile \underline{s}uccessfully launches," \\
$m$ & = & ``a \underline{m}iracle prevents the missile being launched," \\
$h$ & = & ``a nuclear \underline{h}olocaust occurs,"
\end{tabular}
\end{center}

\medskip

\noindent and consider the following model, the \emph{Fine model}:

\begin{center}
\begin{tabular}{c}
\begin{configuration}
\node (1) at (0,0) {\textbullet};
\node [above] at (1.north) {$w$};

\node (M) at (0,1.5) {$\mathcal{F}$};

\node (2) at (2,1) {\textbullet};
\node [above] (2t1) at (2.north) {$u_1$};
\node (2a) at (3.5, 1) {\textbullet};
\draw [->] (2a) to (2);
\node [above] (2t2a) at (2a.north) {$u_2$};
\node (2b) at (5, 1) {\textbullet};
\draw [->, dotted] (2b) to (2a);
\node [above] (2t2b) at (2b.north) {$u_n$};
\node (2c) at (6, 1) {$\textbf{u}$};
\node [below] (2p1) at (2c.south) {$p, s, h$};

\node (3) at (2, -1) {\textbullet};
\node [above] (3t1) at (3.north) {$v_1$};
\node (3a) at (3.5, -1) {\textbullet};
\draw [->] (3a) to (3);
\node [above] (3t3a) at (3a.north) {$v_2$};
\node (3b) at (5, -1) {\textbullet};
\draw [->, dotted] (3b) to (3a);
\node [above] (3t3b) at (3b.north) {$v_k$};
\node (3c) at (6, -1) {$\textbf{v}$};
\node [below] (3p1) at (3c.south) {$p, m$};

\draw [->] (2a) to (2);
\draw [->] (2a) to (2);
\draw [->] (2) to (1);
\draw [->] (3) to (1);
\node [draw, dotted, thick, fit=(2) (2a) (2b) (2t1) (2t2a) (2t2b) ] (u) {};
\node [draw ,dotted, thick, fit=(3) (3a) (3b) (3t1) (3t3a) (3t3b)] (v) {};
\draw [->,decorate,decoration={snake,amplitude=.4mm,segment
  length=2mm, post length=1mm}] (u) to (v);
\end{configuration}
\end{tabular}
\end{center}

An arrow from $x$ to $y$ indicates relative similarity to $w$, so
$u_1$ is more similar to $w$ than $u_2$ is. Arrows are transitive, and
the `snake' arrow between $\textbf{v}$ indicates
that $v_i \preceq_w u_j$ for every $i,j$.  For each $u_i \in \textbf{u}$, $\mathcal{F}, u_i \models p \land s \land h$; and for each $v_i \in \textbf{v}$, $\mathcal{F}, v_i \models p \land m$. World $w$ is intended to represent the real world: Nixon did not push any catastrophic anti-Soviet buttons,\footnote{Although there is no way for us to know this, for the sake of the argument we assume that it is so.} no nuclear missile was successfully launched at the Soviets, no miracle prevented any such missile, and no nuclear holocaust occurred.   World $v_1$ is more similar to $w$ than any world in $\textbf{u}$ is, since in any $\textbf{u}$-world Nixon pushes the button and begins a nuclear holocaust. By (\ref{defmin}), $v_1$ is therefore the minimal $p$-world. At $v_1$ the proposition $h$ is false, and so $\mathcal{F}, w \not \models p \boxright h$. Therefore, Fine concludes, the Nixon counterfactual is false in Lewis' semantics. 

In response, Lewis argues that the proper similarity relation to model the Nixon counterfactual should respect the following system of priorities: 

\begin{enumerate}
\item It is of the first importance to avoid big, widespread, diverse violations of law.
\item It is of the second importance to maximize the spatio-temporal region throughout which perfect match of particular fact prevails.
\item It is of the third importance to avoid even small, localized, simple violations of law.
\item It is of little or no importance to secure approximate similarity of particular fact, even in matters that concern us greatly. (\cite[p.~472]{lewis1979counterfactual})
\end{enumerate}

Based on this system of priorities world $u_1$ is more similar to $w$ than $v_1$ is
because ``perfect match of particular fact counts for much more than
imperfect match, even if the imperfect match is good enough to give us
similarity in respects that matter very much to us.''
\cite[p.~470]{lewis1979counterfactual} That is, worlds in $\textbf{v}$
in which a small miracle prevents the missile being launched may look
quite similar to our world, but only approximately so. And in Lewis'
system of priorities, perfect match outweighs approximate
similarity. The \emph{Lewis model}, then, looks like this:

\begin{center}
\begin{tabular}{c}
\begin{configuration}
\node (1) at (0,0) {\textbullet};
\node [above] at (1.north) {$w$};

\node (M) at (0,1.5) {$\mathcal{L}$};

\node (2) at (2,1) {\textbullet};
\node [above] (2t1) at (2.north) {$u_1$};
\node (2a) at (3.5, 1) {\textbullet};
\draw [->] (2a) to (2);
\node [above] (2t2a) at (2a.north) {$u_2$};
\node (2b) at (5, 1) {\textbullet};
\draw [->, dotted] (2b) to (2a);
\node [above] (2t2b) at (2b.north) {$u_n$};
\node (2c) at (6, 1) {$\textbf{u}$};
\node [below] (2p1) at (2c.south) {$p, s, h$};

\node (3) at (2, -1) {\textbullet};
\node [above] (3t1) at (3.north) {$v_1$};
\node (3a) at (3.5, -1) {\textbullet};
\draw [->] (3a) to (3);
\node [above] (3t3a) at (3a.north) {$v_2$};
\node (3b) at (5, -1) {\textbullet};
\draw [->, dotted] (3b) to (3a);
\node [above] (3t3b) at (3b.north) {$v_k$};
\node (3c) at (6, -1) {$\textbf{v}$};
\node [below] (3p1) at (3c.south) {$p, m$};

\draw [->] (2a) to (2);
\draw [->] (2a) to (2);
\draw [->] (2) to (1);
\draw [->] (3) to (1);
\node [draw, dotted, thick, fit=(2) (2a) (2b) (2t1) (2t2a) (2t2b) ] (u) {};
\node [draw ,dotted, thick,fit=(3) (3a) (3b) (3t1) (3t3a) (3t3b)] (v) {};
\draw [->,decorate,decoration={snake,amplitude=.4mm,segment
  length=2mm, post length=1mm}] (v) to (u);
\end{configuration}
\end{tabular}
\end{center}


In the Lewis model, $u_1$ is the world most similar to $w$, and in $u_1$ the missile successfully launches, there is a nuclear holocaust, and so the Nixon
counterfactual is true. Lewis thus responds to Fine by defending a
similarity order that favours $u_1$ over $v_1$. He is justified by
prioritising perfect over approximate match in a similarity
relation according to the aforementioned system. 

The interpretation of the Nixon counterfactual we will offer is in line with Lewis', though we do not rely on his system of priorities. We will achieve a resolution similar to his without having to defend a model different from Fine's. After all, as Lewis says: ``I do not claim that this pre-eminence of perfect match is intuitively obvious. I do not claim that it is a feature of the similarity relations most likely to guide our explicit judgments. It is not; else the objection we are considering never would have been put forward.''\cite[p.~470]{lewis1979counterfactual} Instead, we will treat the Nixon counterfactual with an explicit ceteris paribus clause, dispatching with the unintuitive pre-eminence of perfect match in constructing the similarity relation.  

Our interpretation of the Nixon counterfactual is much like in preference logic, where formal ceteris paribus reasoning was first applied \cite{RefWorks:373,
  doyle1994representing, RefWorks:2}. Consider the following diagram, which
shows a preference of a raincoat to an umbrella, provided wearing
boots is kept constant: 
\begin{center}
\begin{tabular}{c}
\begin{configuration}
\node (1) at (0,0) {\textbullet};
\node (1a) [left] at (1.west) {\begin{tabular}{c}raincoat\\no boots\end{tabular}};
\node (2) at (3,0) {\textbullet};
\node (2a) [right] at (2.east) {\begin{tabular}{c}umbrella\\no boots\end{tabular}};
\node (3) at (0, -1.5) {\textbullet};
\node (3a) [left] at (3.west) {\begin{tabular}{c}raincoat\\boots\end{tabular}};
\node (4) at (3, -1.5) {\textbullet};
\node (4a) [right] at (4.east) {\begin{tabular}{c}umbrella\\boots\end{tabular}};
\draw [->] (2) to (1);
\draw [->] (4) to (3);
\draw [->, dotted] (1) to (4);
\end{configuration}
\end{tabular}
\end{center}
Arrows point to more preferred alternatives, and are transitive. Evidently, having an umbrella and boots is preferred to having a raincoat and no
boots. The variation of having boots \emph{skews} the preference. If a
ceteris paribus clause is enforced, guaranteeing that in either case
boots will be worn or boots will not be worn, then the correct
preference is recovered. A similar situation occurs in the logic of
counterfactuals. The variation of certain propositions can skew the
similarity order. In Fine's argument, this is done by the variation of
physical law, a miracle. If we were to restrict the worlds considered
during the evaluation of the counterfactual to those that agree with
$w$ on the proposition $m$, then in $\mathcal{F}$ the world $v_1$
would no longer assume the role of minimal $p$-world. Rather, $u_1$
would. In world $u_1$ a nuclear holocaust \emph{does} occur, whence
the counterfactual becomes true, as desired. This is our resolution of
the Nixon argument, which we next formalise.

\section{Ceteris paribus semantics}
We introduce into our language a new conditional operator which
  generalises the usual one. In particular, it accommodates explicit
  ceteris paribus clauses. The authors in \cite{RefWorks:2} were the first to
  define object languages in this way. They developed a modal logic of
  ceteris paribus preferences in the sense of von Wright
  \cite{RefWorks:373}. For now we will take the ordinary
  conditional operator and embed within it a finite set of formulas
  $\Gamma$ understood as containing the \emph{other things} to be kept
  equal.\footnote{The choice of $\Gamma$ finite is largely technical. We will mention some possibilities and difficulties regarding the case where the ceteris paribus set $\Gamma$ may be infinite in our concluding remarks.} 

\begin{definition}[Language $\mathcal{L}_\CP$] \label{cpcflanguage}
Let $\Gamma$ be a finite set of formulas. Then the language $\mathcal{L}_\CP$ is given by the grammar\footnote{We redefine the language more precisely as Definition \ref{fulldefn} in the appendix. For simplicity we work with the one now stated.} 
\begin{align*} \phi \ ::= \ p \ | \ \lnot \phi \ | \ \phi \lor \psi \ | \ [\phi, \Gamma]\psi.\end{align*}
\end{definition}
We understand the modality $[\phi, \Gamma]\psi$ as the counterfactual
$\phi \boxright \psi$ subject to the requirement that the truth of the
formulas in $\Gamma$ does not change. We define $\phi \land \psi :=
\lnot(\lnot \phi \lor \lnot \psi)$, $\phi \rightarrow \psi := \lnot
\phi \lor \psi$, $\langle \phi, \Gamma \rangle \psi := \lnot [\phi, \Gamma] \lnot \psi$. We call the conditional $[\phi, \Gamma]\psi$ a \emph{ceteris paribus conditional}, or, if the antecedent is false, a \emph{ceteris paribus counterfactual}. $\mathcal{L}_\CP$ is interpreted over standard conditional models, and thus requires no additional semantic information.

Some additional notation is required, however. Let $\M = (W, \preceq, V)$ be a conditional model and let $w, u, v \in W$. Let $\Gamma \subseteq\mathcal{L}_\CP$ be finite. 
\begin{itemize}
\item Define the relation $\equiv_\Gamma$ over $W$ by $u \equiv_\Gamma v$ if for all $\gamma \in \Gamma$, $\M, u \models \gamma$ iff $\M, v \models \gamma$. Then $\equiv_\Gamma$ is an equivalence relation.\footnote{Technically, the relation $\equiv_\Gamma$ should be defined together with the semantics in Definition \ref{defcpcf} by mutual recursion. Again, we favour the simpler presentation. }
\item Set $[w]_\Gamma = \{u \in W_w : w \equiv_\Gamma u\}$, the collection of $w$-entertainable worlds which agree with $w$ on $\Gamma$.
\item Define $\unlhd^\Gamma_w \, := \, \preceq_w \cap \ ([w]_\Gamma \times [w]_\Gamma)$, the restriction of $\preceq_w$ to the above worlds.
\end{itemize}
Thus if $u, v \in [w]_\Gamma$ then either $u \unlhd^\Gamma_w v$ or $v \unlhd^\Gamma_w u$.

\begin{definition}[Semantics] \label{defcpcf}
Let $\M = (W, \preceq, V)$ be a conditional model. Then
\begin{center}
\item\begin{tabular}{r c l}
$\lb [\phi, \Gamma]\psi \rb^\M$ & = & $\{w \in W : \min_{\unlhd^\Gamma_w}(\lb \phi \rb^\M) \subseteq \lb \psi \rb^\M\}$.
\end{tabular} 
\end{center}
\end{definition}
The semantics for the regular connectives are the same as those in
Definition \ref{defcfwf}. Notice that we recover the ordinary counterfactual $\phi \boxright \psi$ with
 $[\phi, \emptyset]\psi$.

Consider again the Fine model $\F$. As before we have $\F, w \not \models p \boxright h$, but now \begin{align} \label{eq1} \F, w \models [p, \{m\}]h. \end{align}

We thus think about the Nixon counterfactual by way of ceteris paribus reasoning. Allowing the truth of arbitrary formulas to vary during the
evaluation of a counterfactual can distort the given similarity order,
thereby attributing falsity to a sentence which may be intuitively true. By forcing certain formulas to keep their truth status fixed one can rule out these cases, which has just been demonstrated with (\ref{eq1}). This ceteris paribus qualification is done in preference logic, and indeed in more general scientific and economic practice.\footnote{See Schurz \cite{RefWorks:401} on comparative ceteris paribus laws.} The Nixon counterfactual is simply a situation involving a defeater, or an irrelevant alternative, which ought to be forced out.

\section{Ceteris paribus as a dynamic action} \label{dynamicop}

The modality $[\phi, \Gamma]\psi$ behaves like a dynamic operator, in
the sense of dynamic epistemic logic. For modality-free formulas
$\phi$ and $\psi$, evaluating $[\phi, \Gamma]\psi$ at $w \in W$ amounts to transforming $$\M = (W,\{\preceq_w\}_{w \in W}, V)$$ into $$[\Gamma]\M = (W, \{\unlhd^\Gamma_w\}_{w \in W}, V)$$ and evaluating $\phi \boxright \psi$ at $[\Gamma]\M, w$. This dynamic action is possible since we are altering the relational structure of $\M$ with only a finite amount of information from $\Gamma$.

\begin{figure}[h]
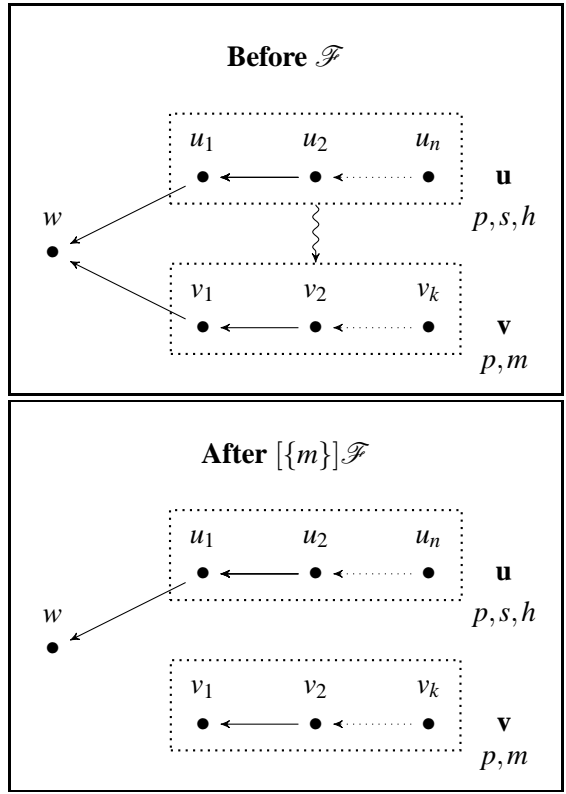

\begin{center}
\begin{tabular}{|c|}
\hline \\
\textbf{Before} $\mathcal{F}$
\\
\
\\
\begin{configuration}
\node (1) at (0,0) {\textbullet};
\node [above] at (1.north) {$w$};
\node (2) at (2,1) {\textbullet};
\node [above] (2t1) at (2.north) {$u_1$};
\node (2a) at (3.5, 1) {\textbullet};
\draw [->] (2a) to (2);
\node [above] (2t2a) at (2a.north) {$u_2$};
\node (2b) at (5, 1) {\textbullet};
\draw [->, dotted] (2b) to (2a);
\node [above] (2t2b) at (2b.north) {$u_n$};
\node (2c) at (6, 1) {$\textbf{u}$};
\node [below] (2p1) at (2c.south) {$p, s, h$};
\node (3) at (2, -1) {\textbullet};
\node [above] (3t1) at (3.north) {$v_1$};
\node (3a) at (3.5, -1) {\textbullet};
\draw [->] (3a) to (3);
\node [above] (3t3a) at (3a.north) {$v_2$};
\node (3b) at (5, -1) {\textbullet};
\draw [->, dotted] (3b) to (3a);
\node [above] (3t3b) at (3b.north) {$v_k$};
\node (3c) at (6, -1) {$\textbf{v}$};
\node [below] (3p1) at (3c.south) {$p, m$};
\draw [->] (2a) to (2);
\draw [->] (2a) to (2);
\draw [->] (2) to (1);
\draw [->] (3) to (1);
\node [draw, dotted, thick, fit=(2) (2a) (2b) (2t1) (2t2a) (2t2b) ] (u) {};
\node [draw ,dotted, thick,fit=(3) (3a) (3b) (3t1) (3t3a) (3t3b)] (v) {};
\draw [->,decorate,decoration={snake,amplitude=.4mm,segment
  length=2mm, post length=1mm}] (u) to (v);
\end{configuration}
\\
\hline \hline\\
\textbf{After} $[\{m\}]\mathcal{F}$
\\
\
\\
\begin{configuration}
\node (1) at (0,0) {\textbullet};
\node [above] at (1.north) {$w$};
\node (2) at (2,1) {\textbullet};
\node [above] (2t1) at (2.north) {$u_1$};
\node (2a) at (3.5, 1) {\textbullet};
\draw [->] (2a) to (2);
\node [above] (2t2a) at (2a.north) {$u_2$};
\node (2b) at (5, 1) {\textbullet};
\draw [->, dotted] (2b) to (2a);
\node [above] (2t2b) at (2b.north) {$u_n$};
\node (2c) at (6, 1) {$\textbf{u}$};
\node [below] (2p1) at (2c.south) {$p, s, h$};
\node (3) at (2, -1) {\textbullet};
\node [above] (3t1) at (3.north) {$v_1$};
\node (3a) at (3.5, -1) {\textbullet};
\draw [->] (3a) to (3);
\node [above] (3t3a) at (3a.north) {$v_2$};
\node (3b) at (5, -1) {\textbullet};
\draw [->, dotted] (3b) to (3a);
\node [above] (3t3b) at (3b.north) {$v_k$};
\node (3c) at (6, -1) {$\textbf{v}$};
\node [below] (3p1) at (3c.south) {$p, m$};
\draw [->] (2a) to (2);
\draw [->] (2a) to (2);
\draw [->] (2) to (1);
\node [draw, dotted, thick, fit=(2) (2a) (2b) (2t1) (2t2a) (2t2b) ] (u) {};
\node [draw ,dotted, thick,fit=(3) (3a) (3b) (3t1) (3t3a) (3t3b)] (v) {};
\end{configuration}\\
\hline
\end{tabular}
\end{center}
\caption{The Fine model before and after $\preceq_{w}$ is upgraded to $\unlhd^{\{m\}}_{w}$.}
\label{fig2}
\end{figure}

Note that the set $W_w$ on which $\preceq_w$ is defined on may change after the update. By updating the model $\M$ with a ceteris paribus clause $\Gamma$, worlds which disagree on $\Gamma$ are relegated to the class $W \setminus W_w$ of infinitely dissimilar (indeed, \emph{irrelevant}) worlds. Figure \ref{fig2} shows how the Fine model changes after being updated by a ceteris paribus clause forcing agreement on $m$. This forces out the $\textbf{v}$-worlds from consideration during the evaluation of the counterfactual; in some sense syntactically `correcting' the provided similarity order. Of course, if each world already agreed with $w$ on $\{m\}$ the ceteris paribus clause would have no effect.

The modality-free condition on $\phi$ and $\psi$ cannot be removed. In particular, one cannot iterate the dynamic ceteris paribus action and retain agreement with the static ceteris paribus counterfactual operator. To see this, consider the example in Figure \ref{noiteration}. Taking $\Gamma = \{s\}$ and $\Delta = \emptyset$, one has $\M, w \models [p, \Gamma][q, \Delta]r$, but $[\Gamma]\M, w \not \models p \boxright [q, \Delta] r$.

\begin{figure}
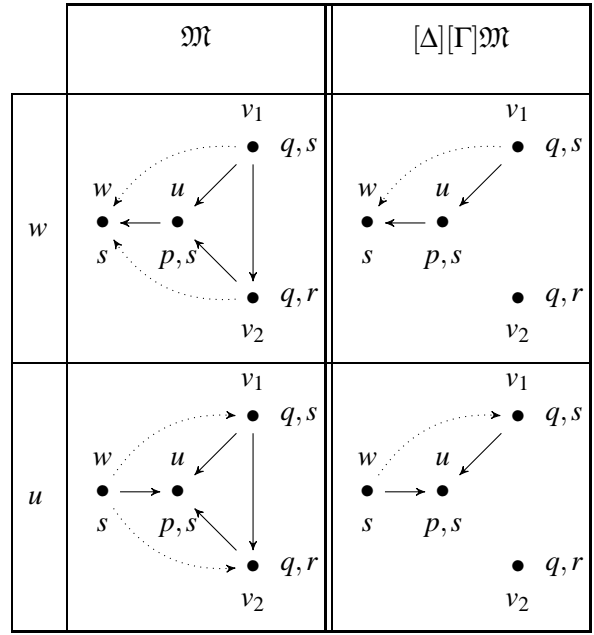
 
\begin{center}
\begin{tabular}{ | m{0.3cm} | m{3cm} | | m{3cm} |}
\cline{2-3}
\multicolumn{0}{r |}{} & \begin{center} $\M$ \end{center} & \begin{center} $[\Delta][\Gamma]\M$ \end{center} \\
\hline
$w$ &
\begin{configuration}
\node (1) at (0,0) {\textbullet};
\node [above] at (1.north) {$w$};
\node [below] at (1.south) {$s$};
\node (2) at (1,0) {\textbullet};
\draw [->] (2) to (1);
\node [above] (2t1) at (2.north) {$u$};
\node [below] at (2.south) {$p,s$};
\node (3) at (2, 1) {\textbullet};
\draw [->] (3) to (2);
\draw [->, dotted, bend right] (3) to (1);
\node [above] (2t2a) at (3.north) {$v_1$};
\node [right] at (3.east) {$q,s$};
\node (4) at (2, -1) {\textbullet};
\draw [<-] (4) to (3);
\draw [->] (4) to (2);
\node [below] (2t2b) at (4.south) {$v_2$};
\draw [->, dotted, bend left] (4) to (1);
\node [right] at (4.east) {$q,r$};
\end{configuration} 
& 
\begin{configuration}
\node (1) at (0,0) {\textbullet};
\node [above] at (1.north) {$w$};
\node [below] at (1.south) {$s$};
\node (2) at (1,0) {\textbullet};
\node [above] (2t1) at (2.north) {$u$};
\node [below] at (2.south) {$p, s$};
\node (3) at (2, 1) {\textbullet};
\draw [->] (2) to (1);
\draw [->] (3) to (2);
\node [above] (2t2a) at (3.north) {$v_1$};
\node [right] at (3.east) {$q, s$};
\node (4) at (2, -1) {\textbullet};
\node [below] (2t2b) at (4.south) {$v_2$};
\node [right] at (4.east) {$q, r$};
\draw [->, dotted, bend right] (3) to (1);
\end{configuration}
\\
\hline
$u$ & 
\begin{configuration}
\node (1) at (0,0) {\textbullet};
\node [above] at (1.north) {$w$};
\node (2) at (1,0) {\textbullet};
\node [below] at (1.south) {$s$};
\node [above] (2t1) at (2.north) {$u$};
\node (3) at (2, 1) {\textbullet};
\node [below] at (2.south) {$p,s$};
\draw [->] (1) to (2);
\draw [->] (3) to (2);
\node [above] (2t2a) at (3.north) {$v_1$};
\node (4) at (2, -1) {\textbullet};
\draw [->] (3) to (4);
\node [right] at (3.east) {$q,s$};
\node [below] (2t2b) at (4.south) {$v_2$};
\draw [->] (4) to (2);
\draw [->, dotted, bend right] (1) to (4);
\draw [->, dotted, bend left] (1) to (3);
\node [right] at (4.east) {$q,r$};
\end{configuration}
&
\begin{configuration}
\node (1) at (0,0) {\textbullet};
\node [above] at (1.north) {$w$};
\node [below] at (1.south) {$s$};
\node (2) at (1,0) {\textbullet};
\node [above] (2t1) at (2.north) {$u$};
\node [below] at (2.south) {$p,s$};
\node (3) at (2, 1) {\textbullet};
\draw [->] (1) to (2);
\draw [->] (3) to (2);
\node [above] (2t2a) at (3.north) {$v_1$};
\node [right] at (3.east) {$q, s$};
\node (4) at (2, -1) {\textbullet};
\node [below] (2t2b) at (4.south) {$v_2$};
\node [right] at (4.east) {$q, r$};
\draw [->, dotted, bend left] (1) to (3);
\end{configuration}
\\
\hline
\end{tabular}
\end{center}
\caption{The horizontal panels labelled $w$ and $u$ define the similarity orders $\preceq_w$ and $\preceq_u$ respectively.}
\label{noiteration}
\end{figure}

\section{Uniformly selecting ceteris paribus clauses}


Having created a formalism which accommodates explicit ceteris paribus clauses, one
might desire a method for uniformly selecting the ceteris paribus set
$\Gamma$. For von Wright \cite{RefWorks:373}, ceteris paribus means
fixing every propositional variable which does not occur in the universe of discourse of the ceteris paribus expression under consideration. More precisely, let $\ud(\phi)$ be the set of all propositional variables occurring in the formula $\phi$, defined inductively as follows.

\begin{center}
\begin{tabular}{l c l}
$\ud(p)$ & = & $\{p\}$ \\
$\ud(\lnot \phi)$ & = & $\ud(\phi)$ \\
$\ud(\phi \lor \psi)$ & = & $\ud(\phi) \cup \ud(\psi)$ \\
$\ud([\phi, \Gamma]\psi)$ & = & $\ud(\phi) \cup \ud(\Gamma) \cup \ud(\psi)$ \\
$\ud(\{\gamma_1, \dots, \gamma_n\})$ & = & $\ud(\gamma_1) \cup \dots \cup \ud(\gamma_n)$.
\end{tabular}
\end{center}
\
\\

Then the ceteris paribus counterfactual \emph{if $\phi$ were the case
  then, ceteris paribus, $\psi$ would be the case} amounts to the
expression \begin{align}\label{vwcp}[\phi, \prop \setminus (\ud(\phi)
             \cup \ud(\psi))]\psi.\end{align} Now all propositional
           variables not occurring in the universe of discourse of the counterfactual antecedent or consequent are fixed. 

Updating the Fine model with respect to von Wright's ceteris paribus
set yields the following model:

\begin{center}
\begin{tabular}{c}
\begin{configuration}
\node (1) at (0,0) {\textbullet};
\node [above] at (1.north) {$w$};
\node (M) at (0,1.5) {$[\{s, m\}]\mathcal{F}$};
\node (2) at (2,1) {\textbullet};
\node [above] (2t1) at (2.north) {$u_1$};
\node (2a) at (3.5, 1) {\textbullet};
\draw [->] (2a) to (2);
\node [above] (2t2a) at (2a.north) {$u_2$};
\node (2b) at (5, 1) {\textbullet};
\draw [->, dotted] (2b) to (2a);
\node [above] (2t2b) at (2b.north) {$u_n$};
\node (2c) at (6, 1) {$\textbf{u}$};
\node [below] (2p1) at (2c.south) {$p, s, h$};
\node (3) at (2, -1) {\textbullet};
\node [above] (3t1) at (3.north) {$v_1$};
\node (3a) at (3.5, -1) {\textbullet};
\draw [->] (3a) to (3);
\node [above] (3t3a) at (3a.north) {$v_2$};
\node (3b) at (5, -1) {\textbullet};
\draw [->, dotted] (3b) to (3a);
\node [above] (3t3b) at (3b.north) {$v_k$};
\node (3c) at (6, -1) {$\textbf{v}$};
\node [below] (3p1) at (3c.south) {$p, m$};
\draw [->] (2a) to (2);
\draw [->] (2a) to (2);
\node [draw, dotted, thick, fit=(2) (2a) (2b) (2t1) (2t2a) (2t2b) ] (u) {};
\node [draw ,dotted, thick,fit=(3) (3a) (3b) (3t1) (3t3a) (3t3b)] (v) {};
\end{configuration}
\end{tabular}
\end{center}

We have $\F, w \models [p, \{m,s\}]h$, but vacuously! It appears that the relation $\unlhd^\Gamma$ is too strong to interact with von Wright's definition. We are requiring that \emph{everything else} is kept equal. This is questionable metaphysics, to say the least. Lewis made a similar observation in \cite{RefWorks:170}, about the counterfactual  \emph{`if kangaroos had no tails, they would topple over'}: 

\begin{quote} We might think it best to confine our attention to
  worlds where kangaroos have no tails and \emph{everything} else is
  as it actually is; but there are no such worlds. Are we to suppose
  that kangaroos have no tails but that their tracks in the sand are
  as they actually are? Then we shall have to suppose that these
  tracks are produced in a way quite different from the actual
  way. [...] Are we to suppose that kangaroos have no tails but that their genetic makeup is as it actually is? Then we shall have to suppose that genes control growth in a way quite different from the actual way (or else that there is something, unlike anything there actually is, that removes the tails). And so it goes; respects of similarity and difference trade off. If we try too hard for exact similarity to the actual world in one respect, we will get excessive differences in some other respect. (\cite[p.~9]{RefWorks:170})
\end{quote}

In fact, for the logic of \emph{ceteris paribus} counterfactuals to function in
a meaningful fashion, every formula occurring in $\Gamma$
must be independent from the counterfactual antecedent. In the Fine model, we insist that the truth values of $s$ and $m$ are
kept fixed. These propositions, however, are nomologically related to $p$, so we
can't change the truth value of $p$ without affecting the truth values of $s$ and $m$. This is 
why the counterfactual $[p, \{m,s\}]h$ is vacuously true, but
then so is the counterfactual $[p, \{m,s\}]\lnot h$.  To accommodate a uniform method for selecting ceteris paribus clauses, more flexibility is required.  What \emph{ought} to be kept equal when we can't
keep \emph{everything else} equal? In the next section we will consider two strategies for relaxing the interpretation of \emph{ceteris paribus} to address this question.



\section{Relaxing the ceteris paribus clause}

\subsection{Na\"{i}ve counting}

We will now introduce another interpretation for the modality $[\phi, \Gamma]\psi$. Let us write $\lb [\phi, \Gamma]\psi \rb^\M_{\AT}$ for the set $\lb [\phi, \Gamma]\psi \rb^\M$ from Definition \ref{defcpcf}, and let $\models_{\AT}$ act as the ordinary satisfaction relation for Boolean formulas, but with \begin{center}$\M, w \models_{\AT} [\phi, \Gamma]\psi$ iff $w \in \lb [\phi, \Gamma]\psi \rb^\M_{\AT}$. \end{center}

Whereas in Definition \ref{defcpcf} we required strict agreement on the set $\Gamma$, in order to develop a logic for ceteris paribus counterfactuals with a weaker semantics we will instead relax the requirement to \emph{maximal agreement}. The best we can do is preserve the set $\Gamma$ as much as possible for any given model.

Let $\Gamma \subseteq \mathcal{L}_\CP$ be finite, and let $\M = (W, \preceq, V)$ be a conditional model. Define $A_\Gamma^\M: W \times W \to 2^\Gamma$ by \begin{align}A_\Gamma^\M(u, v) = \{\gamma \in \Gamma: \M, u \models \gamma \text{ iff } \M, v \models \gamma\}. \end{align}
Define the relation $\preceq_w^\Gamma$ on $W_w$ by $u \preceq_w^\Gamma v$ iff 
\begin{center} $\text{either } |A_\Gamma^\M(u, w)| > |A_\Gamma^\M(v, w)|, \text{ or } |A_\Gamma^\M(u, w)| = |A_\Gamma^\M(v, w)| \text{ and }u \preceq_w v. $
\end{center}

The relation $\preceq_w^\Gamma$ can be seen as a transformed $\preceq_w$, reordering the similarity order so that worlds closer to $w$ preserve at least as much of $\Gamma$ as worlds further away, and if any two worlds agree on $\Gamma$ to the same quantity, then the nearer world is more similar to $w$ with respect to $\preceq$.

\begin{definition}[Semantics] Let $\M = (W, \preceq, V)$ be a conditional model satisfying the limit assumption. Let $\Gamma \subseteq \mathcal{L}_\CP$ be finite. Then \\
\begin{center}
\begin{tabular}{r c l}
$\lb [\phi, \Gamma]\psi \rb^\M_\NC$ & = & $\{w \in W : \min_{\preceq^\Gamma_w}(\lb \phi \rb^\M) \subseteq \lb \psi \rb^\M\}$.
\end{tabular}
\end{center}
\end{definition}

We write $\M, w \models_{\NC} [\phi, \Gamma]\psi$ iff $w \in \lb [\phi, \Gamma]\psi \rb^\M_{\NC}$.

\begin{fact} Let $\M = (W, \preceq, V)$ be a conditional model. Let $w \in W$, and let $\mathbf{X} \in \{\AT, \NC\}$. Then the following are true, where $\pm \alpha$ is shorthand which uniformly stands for either $\alpha$ or $\lnot \alpha$:  \label{cf} \
\begin{enumerate}
\item $\M, w \models \phi \boxright \psi$ iff $\M, w \models_{\mathbf{X}}[\phi, \emptyset]\psi$
\item $\M, w \models_\mathbf{X} (\pm \alpha \land \langle \varphi, \Gamma \rangle (\pm \alpha \land \psi)) \rightarrow \langle \varphi, \Gamma \cup \{\alpha\} \rangle \psi$ 
\item $\M, w \models_{\AT} \langle \phi, \Gamma \rangle \psi  \Rightarrow \M, w \models_{\NC} \langle \phi, \Gamma \rangle \psi$ 
\item $\M, w \models_{\NC} [\phi, \Gamma]\psi  \Rightarrow \M, w \models_{\AT}[\phi, \Gamma]\psi$ 
\end{enumerate}

\end{fact}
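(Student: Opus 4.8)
The plan is to treat the four items by their logical structure: item~1 is a degenerate case, items~3 and~4 are contrapositive-duals of one another, and the genuine work lies in item~2 together with a single ``coincidence'' observation that settles both 3 and~4. For item~1 I would simply observe that $\Gamma=\emptyset$ collapses both refined orders onto $\preceq_w$: concretely $[w]_\emptyset=W_w$ gives $\unlhd^\emptyset_w=\preceq_w$, while $A^\M_\emptyset(u,w)=\emptyset$ for every $u$ makes the count clause of $\preceq^\emptyset_w$ vacuous, leaving $\preceq^\emptyset_w=\preceq_w$. In either case the minimal-$\phi$-world set is $\min_{\preceq_w}(\lb\phi\rb^\M)$, so $[\phi,\emptyset]\psi$ and $\phi\boxright\psi$ have the same truth set, verbatim for both $\mathbf{X}=\AT$ and $\mathbf{X}=\NC$.

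For item~2 I would give one argument covering both readings of $\pm\alpha$ and both values of $\mathbf{X}$. Assuming the antecedent, $\M,w\models_\mathbf{X}\pm\alpha$ and there is a $\varphi$-world $u$, minimal in the order determined by $\mathbf{X}$ and $\Gamma$, with $\M,u\models\pm\alpha\wedge\psi$. The crux is that the two occurrences of $\pm\alpha$ force $w$ and $u$ to agree on $\alpha$, i.e.\ $u\equiv_{\{\alpha\}}w$. For $\mathbf{X}=\AT$ this gives $u\in[w]_{\Gamma\cup\{\alpha\}}\subseteq[w]_\Gamma$, and since $\unlhd^{\Gamma\cup\{\alpha\}}_w$ is just $\unlhd^\Gamma_w$ restricted to a subdomain containing $u$, a world minimal in the larger domain stays minimal in the smaller one; hence $u$ witnesses $\langle\varphi,\Gamma\cup\{\alpha\}\rangle\psi$. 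For $\mathbf{X}=\NC$ I would use the characterisation that the $\preceq^\Delta_w$-minimal $\varphi$-worlds are exactly the $\preceq_w$-minimal worlds among those maximising $|A^\M_\Delta(\cdot,w)|$; passing from $\Gamma$ to $\Gamma\cup\{\alpha\}$ raises this count by exactly $1$ on each world agreeing with $w$ on $\alpha$ and by $0$ on the rest, so $u$, which agrees with $w$ on $\alpha$ and already maximises the $\Gamma$-count among $\varphi$-worlds, still maximises the $(\Gamma\cup\{\alpha\})$-count and remains $\preceq_w$-minimal among the resulting smaller set of maximisers. I expect the $\NC$ half to be the main obstacle, since one must check carefully that the count increment is uniform on the $\alpha$-agreeing worlds and that minimality survives the shrinking of the maximiser set.

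For items~3 and~4 I would first record that, via $\langle\phi,\Gamma\rangle\psi=\lnot[\phi,\Gamma]\lnot\psi$, item~4 is exactly the contrapositive of item~3 with $\psi$ replaced by $\lnot\psi$, so it suffices to prove item~3. The key lemma is that whenever $\lb\phi\rb^\M\cap[w]_\Gamma\neq\emptyset$ the $\AT$- and $\NC$-minimal $\phi$-world sets coincide: full agreement forces the maximal $\Gamma$-count to equal $|\Gamma|$, and the worlds attaining it are precisely $\lb\phi\rb^\M\cap[w]_\Gamma$, on which the count clause of $\preceq^\Gamma_w$ is constant and the order reduces to $\preceq_w$---the same order $\unlhd^\Gamma_w$ carries on that domain. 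Hence both readings agree on $\langle\phi,\Gamma\rangle\psi$ and the implication is immediate.

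The remaining case $\lb\phi\rb^\M\cap[w]_\Gamma=\emptyset$ is where the asymmetry appears: then $\min_{\unlhd^\Gamma_w}(\lb\phi\rb^\M)=\emptyset$, so every $[\phi,\Gamma]\chi$ is vacuously $\AT$-true and $\langle\phi,\Gamma\rangle\psi$ is $\AT$-false, making the antecedent of item~3 false and the implication vacuous, and dually making the consequent of item~4 true. This also explains why only the one-directional implications, and not full equivalences, hold in items~3 and~4.
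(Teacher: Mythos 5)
The paper states this Fact without proof, so there is no official argument to compare against; judged on its own, your proof is correct. Your decomposition --- collapsing item~1 to the observation that both refined orders reduce to $\preceq_w$ when $\Gamma=\emptyset$, the minimality-persistence argument for item~2, and the key lemma that the $\AT$- and $\NC$-minimal $\phi$-world sets coincide whenever $\lb\phi\rb^\M\cap[w]_\Gamma\neq\emptyset$ (with the vacuous case accounting for the one-directional asymmetry) --- is at the right level of detail, and the reduction of item~4 to the contrapositive of item~3 is sound because $\lb\lnot\lnot\psi\rb^\M=\lb\psi\rb^\M$. The only loose end is the degenerate case $\alpha\in\Gamma$ in item~2, where the count is not raised by $1$; there the claim is immediate since $\Gamma\cup\{\alpha\}=\Gamma$ and the witnessing world already satisfies $\psi$.
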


The original ceteris paribus preference logic \cite{RefWorks:2} could be axiomatised using standard axioms together with Fact 1.2 and its converse. A crucial difference with \NC \ semantics is that the converse of Fact 1.2 does not hold. The existence of a $\phi \land\psi$-world which maximally agrees on $\Gamma \cup \{\alpha\}$ does not ensure that $\alpha$ actually holds at that world. In fact, it is not guaranteed that \emph{any} formula from $\Gamma \cup \{\alpha\}$ is obtained. 

\subsection{Maximal supersets}

An approach to counterfactuals familiar to the AI community
\cite{katsuno1991propositional, dalal1988investigations,
  del1994interference, del1996belief} makes use of a selection
function which chooses the `closest' world according to maximal sets
of propositional variables. More specifically, each world $w$ satisfies some set $\textbf{P}_w \subseteq \prop$ of propositional variables, and a world $u$ is a world closest to $w$ if there is no $v$ with $\textbf{P}_u \subset \textbf{P}_v \subseteq \textbf{P}_w$. Taking this as a kind of ceteris paribus formalism we obtain the following variant of our ceteris paribus counterfactuals. First let us define the relation $\sqsubseteq_w^\Gamma$ on $W_w$ by $u \sqsubseteq_w^\Gamma v$ iff
\begin{center} $\text{either } A_\Gamma^\M(v, w) \subset A_\Gamma^\M(u, w), \text{ or }  A_\Gamma^\M(v, w) = A_\Gamma^\M(u, w) \text{ and }u \preceq_w v. $
\end{center}

\begin{definition}[Semantics] Let $\M = (W, \preceq, V)$ be a conditional model satisfying the limit assumption. Let $\Gamma \subset \mathcal{L}_\CP$ be finite. Then 
\
\\
\begin{center}
\begin{tabular}{r c l}
$\lb [\phi, \Gamma]\psi \rb^\M_\MS$ & = & $\{w \in W : \min_{\sqsubseteq^\Gamma_w}(\lb \phi \rb^\M) \subseteq \lb \psi \rb^\M\}$.
\end{tabular}
\end{center}
\end{definition}
We write $\mathfrak{M}, w \models_\MS [\phi, \Gamma]\psi$ iff $w \in \lb [\phi, \Gamma]\psi \rb^\M_\MS$. Now $\Gamma$ is maximally preserved in the sense that worlds which preserve the same propositions as another, and furthermore preserve additional propositions from $\Gamma$, are deemed to approximate $\Gamma$ more closely; while worlds $u, v$ with neither $A^\mathfrak{M}_\Gamma(u, w) \subseteq A^\mathfrak{M}_\Gamma(v, w)$ nor $A^\mathfrak{M}_\Gamma(v, w) \subseteq A^\mathfrak{M}_\Gamma(u, w)$ are considered incomparable.

\begin{fact} [Extends Fact \ref{cf}] Let $\M = (W, \preceq, V)$ be a conditional model. Let $w \in W$. Then the following are true.
\begin{enumerate}
\item $\M, w \models \phi \boxright \psi$ iff $\M, w \models_{\MS}[\phi, \emptyset]\psi$
\item $\M, w \models_\MS (\pm \alpha \land \langle \varphi, \Gamma \rangle (\pm \alpha \land \psi)) \rightarrow \langle \varphi, \Gamma \cup \{\alpha\} \rangle \psi$ 
\item $\M, w \models_{\AT} \langle \phi, \Gamma \rangle \psi  \Rightarrow \M, w \models_{\MS} \langle \phi, \Gamma \rangle \psi$ 
\item $\M, w \models_{\MS} [\phi, \Gamma]\psi  \Rightarrow \M, w \models_{\AT}[\phi, \Gamma]\psi$ 
\end{enumerate}
\end{fact}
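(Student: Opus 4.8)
The plan is to prove Fact 2 by reducing each of its four parts to the corresponding part of Fact \ref{cf}, exploiting that the only thing that has changed is the ordering used in the $\min$-operation. I would first record the key structural observation that drives everything: the relation $\sqsubseteq_w^\Gamma$ refines $\preceq_w$ in the same lexicographic style as $\preceq_w^\Gamma$, but with the \emph{subset order} on the agreement sets $A_\Gamma^\M(\cdot, w)$ in place of the \emph{cardinality order}. Since $A_\Gamma^\M(v,w) \subseteq A_\Gamma^\M(u,w)$ implies $|A_\Gamma^\M(u,w)| \ge |A_\Gamma^\M(v,w)|$, every $\sqsubseteq_w^\Gamma$-comparison that declares $u$ strictly better than $v$ is matched by a $\preceq_w^\Gamma$-comparison doing the same, though $\sqsubseteq_w^\Gamma$ leaves some pairs incomparable that $\preceq_w^\Gamma$ would compare. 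This is the single fact I would extract and reuse.

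\textbf{Parts 1, 3, and 4.} For part (1), I would observe that when $\Gamma = \emptyset$ the only possible agreement set is $\emptyset$, so $A_\emptyset^\M(u,w) = A_\emptyset^\M(v,w)$ for all $u,v$, forcing $u \sqsubseteq_w^\Gamma v \iff u \preceq_w v$; hence $\min_{\sqsubseteq_w^\emptyset} = \min_{\preceq_w}$ and the claim is just the semantics of $\boxright$ from Definition \ref{defcfwf}. For parts (3) and (4) I would argue via the minimal worlds. The $\AT$-semantics takes $\min$ over $\unlhd_w^\Gamma$, which only compares worlds that agree with $w$ on \emph{all} of $\Gamma$, i.e.\ worlds $x$ with $A_\Gamma^\M(x,w) = \Gamma$. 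Any such world is $\sqsubseteq_w^\Gamma$-minimal below every world with a strictly smaller agreement set, so I would show $\min_{\sqsubseteq_w^\Gamma}(\lb\phi\rb) \cap [w]_\Gamma = \min_{\unlhd_w^\Gamma}(\lb\phi\rb)$ whenever $[w]_\Gamma \cap \lb\phi\rb \neq \emptyset$, and chase the inclusions $\subseteq \lb\psi\rb$ through this identity to get the one-directional implications (3) and (4); the direction of each implication is dictated by which $\min$-set is contained in the other.

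\textbf{Part 2, the main obstacle.} The real work is part (2), the ceteris paribus introduction principle, and I expect the delicate point there to be exactly the one flagged after Fact \ref{cf}: under a relaxed semantics a maximally-agreeing $\phi\wedge\psi$-world need not actually satisfy $\alpha$, so I cannot simply quote the $\AT$ argument. The plan is to unpack $\langle\varphi,\Gamma\rangle(\pm\alpha\wedge\psi)$ as asserting the existence of a $\sqsubseteq_w^\Gamma$-minimal $\varphi$-world $x$ satisfying $\pm\alpha\wedge\psi$, and then, using the antecedent conjunct $\pm\alpha$ which fixes $w$'s truth value of $\alpha$, show that $x$ witnesses $\langle\varphi,\Gamma\cup\{\alpha\}\rangle\psi$. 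The crux is that since $x \models \pm\alpha$ and $w \models \pm\alpha$ (same sign), we have $\alpha \in A_{\Gamma\cup\{\alpha\}}^\M(x,w)$, so passing from $\Gamma$ to $\Gamma\cup\{\alpha\}$ can only \emph{improve} $x$'s relative standing; I would need to verify that no $\varphi$-world that was incomparable or worse than $x$ under $\sqsubseteq_w^\Gamma$ becomes strictly better under $\sqsubseteq_w^{\Gamma\cup\{\alpha\}}$ and thereby displaces $x$ from the minimal set. This monotonicity-of-agreement check, carried out on the subset order rather than on cardinalities, is where the argument differs from the $\NC$ case and is the step I would write out most carefully.
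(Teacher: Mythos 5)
The paper states this Fact without proof, so there is nothing to compare your argument against; judged on its own, your reduction is sound and all four parts go through as you describe. Parts (1), (3) and (4) rest on exactly the observations you make: $\sqsubseteq_w^\emptyset\,=\,\preceq_w$, and worlds in $[w]_\Gamma$ have the maximal agreement set $\Gamma$, so $\min_{\unlhd_w^\Gamma}(\lb\phi\rb)=\min_{\sqsubseteq_w^\Gamma}(\lb\phi\rb)\cap[w]_\Gamma\subseteq\min_{\sqsubseteq_w^\Gamma}(\lb\phi\rb)$, which gives (4) directly and (3) by transferring the witness. For part (2), the displacement check you flag as the crux does close, and it is worth recording why: assuming $\alpha\notin\Gamma$ (else the claim is trivial) and letting $x$ be the $\sqsubseteq_w^\Gamma$-minimal $\varphi$-world satisfying $\pm\alpha\land\psi$, any $\varphi$-world $z$ with $z\sqsubset_w^{\Gamma\cup\{\alpha\}}x$ either has $A^\M_{\Gamma\cup\{\alpha\}}(x,w)\subset A^\M_{\Gamma\cup\{\alpha\}}(z,w)$ --- in which case, since $\alpha$ lies in both sets (here the antecedent conjunct $\pm\alpha$ at $w$ is used), deleting $\alpha$ preserves the proper inclusion and yields $z\sqsubset_w^\Gamma x$ --- or has equal agreement sets and $z\prec_w x$, which again yields $z\sqsubset_w^\Gamma x$; either way the minimality of $x$ under $\sqsubseteq_w^\Gamma$ is contradicted. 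Writing out that two-case argument is all that remains to turn your outline into a complete proof.
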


\section{Dynamics and the Nixon counterfactual}

Given a ceteris paribus interpretation $\mathbf{X} \in \{\AT, \NC,
\MS\}$, let us write $[\Gamma]_\mathbf{X}\mathfrak{M}$ for the model
$\mathfrak{M}$ updated with a ceteris paribus clause $\Gamma$
according to interpretation $\mathbf{X}$. Specifically, we have the following definition.

\begin{definition}
Let $\mathfrak{M} = (W, \preceq, V)$ be a conditional model, and let $\Gamma \subseteq \mathcal{L}_\CP$ be a finite set of formulas. We define the \emph{updated models} $[\Gamma]_\mathbf{X}\mathfrak{M}$, for $\mathbf{X} \in \{\AT, \NC, \MS\}$, by \\
\begin{center}
\begin{tabular} {l c l}
$[\Gamma]_\AT\mathfrak{M}$ & $:=$  & $(W, \unlhd^\Gamma, V);$ \\
$[\Gamma]_\NC\mathfrak{M}$ & $:=$ & $(W, \preceq^\Gamma, V);$ \\
$[\Gamma]_\MS\mathfrak{M}$ & $:=$ & $(W, \sqsubseteq^\Gamma, V).$
\end{tabular}
\end{center}
\end{definition}

This provides us with three dynamic \emph{ceteris paribus} updates. Let us see how they treat the Nixon counterfactual. We have already witnessed the $\AT$ update with ceteris paribus sets $\{m\}$ and $\{m,s\}$, and concluded that both make the counterfactual true (vacuous truth with $\{m, s\}$). $\NC$ and $\MS$ updates agree on the truth of the Nixon counterfactual with the $\AT$ update on $\{m\}$, but disagree on $\{m,s\}$. Updating the Fine model with von Wright's ceteris paribus clause $\{m, s\}$ according to the $\NC$ interpretation yields $\F$ again. Thus $\F, w \not\models_\NC [p, \{m,s\}]h$. Updating Fine's model with $\{m, s\}$ according to the $\MS$ interpretation gives the following model:

\begin{center}
\begin{tabular}{c}
\begin{configuration}
\node (1) at (0,0) {\textbullet};
\node [above] at (1.north) {$w$};

\node (M) at (0,1.5) {$[\{m,s\}]_\MS\mathcal{F}$};

\node (2) at (2,1) {\textbullet};
\node [above] (2t1) at (2.north) {$u_1$};
\node (2a) at (3.5, 1) {\textbullet};
\draw [->] (2a) to (2);
\node [above] (2t2a) at (2a.north) {$u_2$};
\node (2b) at (5, 1) {\textbullet};
\draw [->, dotted] (2b) to (2a);
\node [above] (2t2b) at (2b.north) {$u_n$};
\node (2c) at (6, 1) {$\textbf{u}$};
\node [below] (2p1) at (2c.south) {$p, s, h$};

\node (3) at (2, -1) {\textbullet};
\node [above] (3t1) at (3.north) {$v_1$};
\node (3a) at (3.5, -1) {\textbullet};
\draw [->] (3a) to (3);
\node [above] (3t3a) at (3a.north) {$v_2$};
\node (3b) at (5, -1) {\textbullet};
\draw [->, dotted] (3b) to (3a);
\node [above] (3t3b) at (3b.north) {$v_k$};
\node (3c) at (6, -1) {$\textbf{v}$};
\node [below] (3p1) at (3c.south) {$p, m$};

\draw [->] (2a) to (2);
\draw [->] (2a) to (2);
\draw [->] (2) to (1);
\draw [->] (3) to (1);
\node [draw, dotted, thick, fit=(2) (2a) (2b) (2t1) (2t2a) (2t2b) ] (u) {};
\node [draw ,dotted, thick, fit=(3) (3a) (3b) (3t1) (3t3a) (3t3b)] (v) {};
\end{configuration}
\end{tabular}
\end{center}

In $[\{m,s\}]_\MS\F$ the Nixon counterfactual is not true, and neither is $p \boxright \lnot h$. 

We summarise the truth of the Nixon counterfactuals $p \boxright h$ and $p \boxright \lnot h$ in the various updated Fine models in the following table. 

\begin{center}
\begin{tabular}{cc|c|c|c|}
\cline{3-5}
                                                                   &        & \multicolumn{3}{c|}{\emph{Interpretation}} \\ \hline
\multicolumn{1}{|c|}{\emph{Counterfactual}}                               & \emph{Clause} & $ \quad \AT \quad$       & $\quad\NC\quad$       & $\quad\MS\quad$       \\ \hline
\multicolumn{1}{|c|}{\multirow{2}{*}{$p \boxright h$}}     & $\{m\}$      & $\mathsf{true}$       & $\mathsf{true}$     & $\mathsf{true}$     \\ \cline{2-5} 
\multicolumn{1}{|c|}{}                                             & $\{m,s\}$     & $\mathsf{true}$       & $\mathsf{false}$     & $\mathsf{false}$     \\ \hline
\multicolumn{1}{|c|}{\multirow{2}{*}{$p \boxright \lnot h$}} & $\{m\}$      & $\mathsf{false}$      & $\mathsf{false}$      & $\mathsf{false}$      \\ \cline{2-5} 
\multicolumn{1}{|c|}{}                                             & $\{m,s\}$     & $\mathsf{true}$       & $\mathsf{true}$      & $\mathsf{false}$     \\ \hline
\end{tabular}
\end{center}

The rows labelled with $p \boxright h$ and $p \boxright \lnot h$ indicate the truth value of those counterfactuals in the updated models $[\Gamma]_\mathbf{X}\F$, where $\Gamma$ is given by the cell in the \emph{Clause} column and $\mathbf{X}$ is given by the \emph{Interpretation} column.

Formally, the table illustrates how different truth values for the Nixon counterfactual may be obtained by combining the various interpretations of ceteris paribus ($\AT, \NC, \MS$) with the different ceteris paribus sets (the selected set $\{m\}$ or von Wright's set $\{m, s\}$). But this doesn't mean that all combinations are legitimate formalisations of Fine's argument. Fine's story is about small miracles that can interfere with Nixon's ploy, not about whether the missile would successfully launch should Nixon press the button. That the proposition $s$ must be able to vary is crucial to the story, so one shouldn't attempt to keep it equal, on a par with $m$. 
  We adhere to our favoured formalisation of the Nixon argument in which the proposition $m$ is the only one that needs to be kept equal. We have given principled reasons for this choice, and our selection makes the counterfactual true -- all interpretations agree on that. The point of the table is a formal one, namely that the truth-values of counterfactuals vary with different ceteris paribus updates according to their interpretation.

\section{Theorems}

In the appendix (Corollary \ref{col1}) we prove that the logic $\Lambda^{\mathcal{L}_\CP}_{\mathfrak{C}}$ of ceteris paribus counterfactuals over the class of conditional frames $\mathfrak{C}$ is complete for $\AT/\NC/\MS$ semantics. The proof works by translating formulas of $\mathcal{L}_\CP$ into formulas of a \emph{comparative possibility} language, in the style of Lewis, and axiomatising the equivalent logic. This permits a clearer reduction of ceteris paribus modalities to basic comparative possibility operators, albeit with a translation exponential in the size of $\Gamma$.

\section{Concluding remarks}

This paper has introduced a ceteris paribus logic for counterfactual
reasoning by adapting the formalism in \cite{RefWorks:2}. We have
introduced some variants on ceteris paribus logic in light of
philosophical difficulties arising in the application of
conditionals. We apply our framework to the \emph{Nixon counterfactual}, and with this bring a new perspective to the problem.  We have suggested and explored the dynamic
perspective of our various syntactic interpretations of \emph{ceteris paribus}, which has resulted in a richer understanding of
so-called \emph{comparative} ceteris paribus reasoning in formal settings. We
have provided completeness theorems which demonstrate that the ceteris
paribus logics so obtained ultimately reduce to the underlying
counterfactual logic; in our case Lewis' VC. With our framework we
defend Lewisian semantics by appealing to examples from preference
logic, where ceteris paribus reasoning is more widely discussed.

Finally, we outline some limitations of our framework and directions for future research.

\noindent\emph{Iterated ceteris paribus actions.} We saw in Section \ref{dynamicop} that iterated ceteris paribus counterfactuals deviate in truth-value from the corresponding update-then-counterfactual sequence.  Such difficulties with iterated counterfactuals are not so uncommon. We leave the task of understanding the full interaction between the two for further investigation.

\noindent\emph{Cardinality restrictions on $\Gamma$.} In general, ceteris paribus
reasoning requires keeping equal as much information as possible, and
sometimes unknown information (for example, unanticipated defeaters of
laws). Keeping everything else equal may indeed mean keeping equal an
indefinite, and possibly infinite, set of things. Exploring ceteris
paribus logic without cardinality restrictions to $\Gamma$ is thus
more than a mere technical exercise. But it is not so straightforward
to extend the present framework to accommodate the presence of
infinite $\Gamma$. The translations presented in the appendix
  only carry over to the infinite case for infinitary languages,
  which is not much of a solution. For the strict
  ceteris paribus semantics, we instead suggest following the $\delta$-flexibility
  approach of \cite{RefWorks:390}. For the relaxed ceteris paribus semantics, there
  are conceptual difficulties which arise with the comparison of infinite
  sets: when should we say of two infinite sets that one keeps more
  things equal than the other? Clearly na\"{i}ve counting will not suffice. Minimising distance with respect to $\sqsubseteq^\Gamma$ is more promising, but has its own problems. We leave this challenging technical enterprise for future research.

\section{Acknowledgments} 

We  wish to thank the participants at the Australasian Association of Logic and the Analysis, Randomness and Applications meetings held in New Zealand in 2014. We also wish to thank Sam Baron, Andrew Withy, and the anonymous referees for valuable comments.

\begin{appendix}
\section{Appendix}
We first recast Definition \ref{cpcflanguage} in a more formally precise manner.

\begin{definition} \label{fulldefn} For each ordinal $\alpha$ let $\mathcal{L}_\alpha$ be given by $$\phi \ :: =  \ p \ | \ \bot \ | \ \lnot \phi \ | \ \phi \lor \psi \ | \ [\phi, \Gamma] \psi$$ where $\Gamma \subseteq \mathcal{L}_\beta$ is finite and $\beta < \alpha$. $\mathcal{L}_{\CP}$ is then defined to be $\bigcup_\alpha \mathcal{L}_\alpha$. \end{definition} 

This ensures the sets $\Gamma$ are well-defined. One can define a language $\mathcal{L}$ of comparative possibility in a similar style, though we will only give the following grammar
\begin{center} $\phi \ ::=  \ p \ | \ \bot \ | \ \lnot \phi \ | \ \phi \lor \psi \ | \ \phi \preceq \psi \ | \ \phi \preceq^\Gamma \psi \ | \ \phi \unlhd^\Gamma \psi | \ \phi \sqsubseteq^\Gamma \psi. $\end{center}

\noindent We further set

\begin{tabular}{r c l r c l r c l}
$\phi \prec \psi$ & $:=$ & $\lnot (\psi \preceq \phi)$; & $\phi \prec^\Gamma \psi$ & $:=$ & $\lnot(\psi \preceq^\Gamma \phi)$; & $\phi \lhd^\Gamma \psi$ & $:=$ & $\lnot(\psi \unlhd^\Gamma \phi);$\\
$\phi \sqsubset^\Gamma \psi$ & $:=$ & $\lnot(\psi \sqsubseteq^\Gamma \phi)$; & $\Diamond \phi$ & $:=$ & $\phi \prec \bot$; & $\square \phi$ & $ := $ & $ \lnot \Diamond \lnot \phi$.\\
\end{tabular}

\begin{definition}[Semantics]\label{def:sem-sim}
Let $\mathfrak{M}, w$ be a conditional model. Then \end{definition}
\begin{tabular}{r c l} 
$\lb p \rb^\M$ & = & $V(p)$;\\
$\lb \bot \rb^\M$ & = & $\emptyset$;\\
$\lb \lnot \phi\rb^\M $ & = & $W\setminus \lb \phi \rb^\M$;\\
$\lb \phi \lor \psi \rb^\M$ & = & $\lb \phi \rb^\M \cup \lb \psi \rb^\M$;\\
$\lb \phi \preceq \psi \rb^\M$ & = & $\{w \in W :\forall u \in W_w \ \exists v \in W_w $ such that if $u \in \lb\psi\rb^\M$ then $v \in \lb\phi\rb^\M$ and $v \preceq_w u\}$; \\
$\lb \phi \preceq^\Gamma \psi \rb^\M$ & = & $\{w \in W :\forall u \in W_w \ \exists v \in W_w $ such that if $u \in \lb\psi\rb^\M$ then $v \in \lb\phi\rb^\M$ and $v \preceq^\Gamma_w u\}$; \\
$\lb \phi \unlhd^\Gamma \psi \rb^\M$ & = & $\{w \in W :\forall u \in [w]_\Gamma \ \exists v \in [w]_\Gamma $ such that if $u \in \lb\psi\rb^\M$ then $v \in \lb\phi\rb^\M$ and $v \unlhd^\Gamma_w u\}$; \\
$\lb \phi \sqsubseteq^\Gamma \psi \rb^\M$ & = & $\{w \in W :\forall u \in W_w \ \exists v \in W_w $ such that if $u \in \lb\psi\rb^\M$ then $v \in \lb\phi\rb^\M$ and $v \sqsubseteq^\Gamma_w u\}$.
\end{tabular}

\begin{lemma} \label{definable}The modal operator $[\phi, \Gamma]\psi$ under $\NC$ semantics is definable in $\mathcal{L}$.
\end{lemma}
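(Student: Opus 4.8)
The plan is to adapt Lewis' reduction of the would-counterfactual to comparative possibility, carried out over the reordered relation $\preceq^\Gamma_w$ rather than $\preceq_w$. Concretely, I propose the translation clause
\[
\tr([\phi,\Gamma]\psi) \ := \ \lnot\Diamond\,\tr(\phi) \ \lor \ \bigl((\tr(\phi)\land\tr(\psi))\prec^\Gamma(\tr(\phi)\land\lnot\tr(\psi))\bigr),
\]
where $\tr$ is applied recursively to the (lower-rank) subformulas $\phi$ and $\psi$. Since $\mathcal{L}$ contains $\preceq^\Gamma$ as a primitive comparative-possibility operator (Definition~\ref{def:sem-sim}) and $\Diamond$ is defined from $\preceq$, the right-hand side is an $\mathcal{L}$-formula; it remains to show it has the same truth set as $[\phi,\Gamma]\psi$ under $\NC$ semantics. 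Note that $\Diamond$ may be taken with respect to $\preceq$ rather than $\preceq^\Gamma$: both relations have domain $W_w$, so $\Diamond\,\tr(\phi)$ correctly expresses that some $\tr(\phi)$-world is entertainable from $w$.

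First I would establish the structural facts about $\preceq^\Gamma_w$ needed for the reduction. Reflexivity, transitivity and totality follow from the corresponding properties of $\preceq_w$, since $\preceq^\Gamma_w$ is the lexicographic combination of agreement count with $\preceq_w$. For well-foundedness the finiteness of $\Gamma$ is essential: the map $u\mapsto |A^\M_\Gamma(u,w)|$ takes only finitely many values, so any nonempty $S\subseteq W_w$ has a nonempty subset on which $|A^\M_\Gamma(\cdot,w)|$ is maximal, within which the limit assumption on $\preceq_w$ supplies a $\preceq_w$-minimal element; these are exactly the $\preceq^\Gamma_w$-minimal elements of $S$. Hence $\min_{\preceq^\Gamma_w}(\lb\phi\rb^\M)$ is nonempty whenever $\lb\phi\rb^\M\cap W_w\neq\emptyset$. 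Fixing $w$, I then split on $\Diamond\phi$: if $\lb\phi\rb^\M\cap W_w=\emptyset$ the $\NC$-minimum is empty, so $w\models_\NC[\phi,\Gamma]\psi$ vacuously while the left disjunct $\lnot\Diamond\,\tr(\phi)$ holds; conversely, if that disjunct holds then there are no entertainable $\phi$-worlds and both sides are true.

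The core of the argument is the non-vacuous case, where I claim
\[
\min_{\preceq^\Gamma_w}(\lb\phi\rb^\M)\subseteq\lb\psi\rb^\M \quad\Longleftrightarrow\quad w\models (\tr(\phi)\land\tr(\psi))\prec^\Gamma(\tr(\phi)\land\lnot\tr(\psi)).
\]
Unfolding $\prec^\Gamma$ via Definition~\ref{def:sem-sim}, the right side says some $\phi\land\psi$-world $u$ is such that every $\phi\land\lnot\psi$-world is strictly farther from $w$ than $u$ with respect to $\preceq^\Gamma_w$. For the forward direction, any $\preceq^\Gamma_w$-minimal $\phi$-world is by hypothesis a $\phi\land\psi$-world, and totality together with its minimality forces every $\phi\land\lnot\psi$-world strictly farther than it. For the backward direction, the witnessing $u$ prevents any minimal $\phi$-world from satisfying $\lnot\psi$: such a world, being a $\phi\land\lnot\psi$-world, would have to lie strictly farther than $u$, yet as a minimal $\phi$-world it is at least as close as $u$, a contradiction. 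The bookkeeping for the case where $\phi\land\psi$ is impossible while $\phi$ is possible is immediate, as both sides then come out false.

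The step I expect to be the main obstacle is not the equivalence itself, which is Lewis' argument transported essentially verbatim, but pinning down the two ingredients that make that transport legitimate. First, one must confirm that $\preceq^\Gamma_w$ genuinely inherits the limit assumption, so that the min-based $\NC$ clause matches the comparative-possibility reading; this is precisely where the finiteness restriction on $\Gamma$ is used and cannot be dropped. Second, because $\Gamma$ may itself contain ceteris paribus modalities, the relation $\preceq^\Gamma_w$ and the satisfaction relation are defined by mutual recursion; to keep the translation and its correctness proof well-founded I would run the induction along the ordinal stratification of Definition~\ref{fulldefn}, so that every $\gamma\in\Gamma$ has strictly lower rank than $[\phi,\Gamma]\psi$ and its truth value — and hence the relation $\preceq^\Gamma_w$ — is fixed before the clause above is verified.
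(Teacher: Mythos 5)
Your translation is exactly the paper's --- $\lnot\Diamond\,\tr(\phi)\lor\chi$ is just the defined form of $\Diamond\phi\rightarrow\chi$ --- and your case split on $\Diamond\phi$ together with the minimality argument for the non-vacuous case is the same reduction the paper carries out in its proof of Lemma \ref{definable}. The extra care you take over the well-foundedness of $\preceq^\Gamma_w$ and the ordinal stratification handling the mutual recursion is detail the paper leaves implicit, but the route is essentially identical.
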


\begin{proof}
We show that \begin{align*}
\M, w \models_{\NC} [\phi, \Gamma]\psi \text{ iff } \ \M, w \models \Diamond \phi \rightarrow (\phi \land \psi) \prec^\Gamma (\phi \land \lnot \psi).
\end{align*}
\noindent $\Rightarrow:$ Assume $\M, w \models \Diamond \phi$. Then
there is a world $x \in W_w$ such that $\M, x\models
\phi$. So, by assumption, $\min_{\unlhd^\Gamma_w}(\lb \phi \rb^\M)\, \neq \,
\emptyset$ and $\min_{\unlhd^\Gamma_w}(\lb \phi \rb^\M) \subseteq \lb \psi \rb^\M$. Hence, there exists $y \in W_w$ such that
$\M, y\models \phi \land \psi$ and for every world $z \in W_w$, if
$z\preceq_w^\Gamma y$ then $z \not\in \lb \phi \land \lnot \psi \rb^\M$. This is exactly $\M, w\models \con \phi \psi \prec^\Gamma \con \phi {\lnot\psi}$. 

\noindent $\Leftarrow:$ By contrapositive. Assume $\M, w\not\models [\phi,
\Gamma]\psi$. Then, by the semantic definition, there is an $x \in
\min_{\unlhd^\Gamma_w}(\lb \phi \rb^\M)$ such that $x \not \in \lb
\psi \rb^\M$. So $\M, w\models \Diamond \phi$, and for every $x \in
W_w$, there exists $y\in W_w$ (namely $v$)
such that if $x \in \lb \phi \land \psi \rb^\M$, then $y\preceq_w
^\Gamma x$
and $x\in \lb\phi \land \lnot \psi\rb^\M$. Hence, $\M, w \models (\phi
\land \lnot \psi) \preceq^\Gamma (\phi \land \psi)$, so $\M,
w\not\models (\phi \land \psi) \prec^\Gamma (\phi\land\lnot\psi)$, and
we are done. 
\end{proof}

\begin{lemma} \label{definable2}The modal operator $[\phi, \Gamma]\psi$ under $\AT$ semantics is definable in $\mathcal{L}$.
\end{lemma}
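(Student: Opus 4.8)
The plan is to mirror the proof of Lemma \ref{definable}, establishing
\[
\M, w \models_{\AT} [\phi, \Gamma]\psi \quad\text{iff}\quad \M, w \models (\phi \lhd^\Gamma \bot) \rightarrow (\phi \land \psi) \lhd^\Gamma (\phi \land \lnot \psi).
\]
The consequent has the same shape as in the $\NC$ case, with $\prec^\Gamma$ replaced by $\lhd^\Gamma$, but the antecedent must be changed. The reason is that the $\AT$-relation $\unlhd^\Gamma_w$ is defined only on $[w]_\Gamma$, so $\min_{\unlhd^\Gamma_w}(\lb \phi \rb^\M)$ ranges over $\phi$-worlds that agree with $w$ on $\Gamma$; the plain possibility operator $\Diamond \phi = \lnot(\bot \preceq \phi)$, whose $\preceq$-clause quantifies over all of $W_w$, is therefore too coarse.

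First I would record that $\phi \lhd^\Gamma \bot$ defines exactly the relativised possibility \emph{``there is a $\phi$-world in $[w]_\Gamma$''}. Unfolding the abbreviation, $\phi \lhd^\Gamma \bot = \lnot(\bot \unlhd^\Gamma \phi)$, and since $\lb \bot \rb^\M = \emptyset$ the clause for $\unlhd^\Gamma$ in Definition \ref{def:sem-sim} collapses to: $\M, w \models \bot \unlhd^\Gamma \phi$ iff no $u \in [w]_\Gamma$ satisfies $\phi$. Negating yields the claim. This is the correct antecedent because when no $\phi$-world lies in $[w]_\Gamma$ we have $\min_{\unlhd^\Gamma_w}(\lb \phi \rb^\M) = \emptyset$, so $[\phi, \Gamma]\psi$ is vacuously true at $w$ while the implication is also vacuously true.

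For the substantive case I would assume $\M, w \models \phi \lhd^\Gamma \bot$; then, the restriction of a well-founded total order to a subset being again well-founded, the limit assumption gives $\min_{\unlhd^\Gamma_w}(\lb \phi \rb^\M) \neq \emptyset$. For the ($\Rightarrow$) direction, take any minimal $\phi$-world $y \in [w]_\Gamma$: the hypothesis $\min_{\unlhd^\Gamma_w}(\lb \phi \rb^\M) \subseteq \lb \psi \rb^\M$ gives $y \models \phi \land \psi$, and rules out any $(\phi \land \lnot \psi)$-world $v \in [w]_\Gamma$ with $v \unlhd^\Gamma_w y$, since such a $v$ would itself be minimal (by transitivity and minimality of $y$) and hence satisfy $\psi$. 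Witnessing $u := y$ thus refutes $(\phi \land \lnot \psi) \unlhd^\Gamma (\phi \land \psi)$, which is precisely $(\phi \land \psi) \lhd^\Gamma (\phi \land \lnot \psi)$. For ($\Leftarrow$) I argue by contraposition: if $\min_{\unlhd^\Gamma_w}(\lb \phi \rb^\M) \not\subseteq \lb \psi \rb^\M$, pick a minimal $x \in [w]_\Gamma$ with $x \models \phi \land \lnot \psi$; then the antecedent holds, and $x$ witnesses $(\phi \land \lnot \psi) \unlhd^\Gamma (\phi \land \psi)$, because totality of $\unlhd^\Gamma_w$ on $[w]_\Gamma$ and minimality of $x$ give $x \unlhd^\Gamma_w u$ for every $(\phi \land \psi)$-world $u \in [w]_\Gamma$. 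Hence the consequent fails and the whole implication is false.

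The main obstacle is conceptual rather than computational: identifying the right possibility operator for the restricted relation. Once one observes that the quantification in the $\unlhd^\Gamma$-clause is already confined to $[w]_\Gamma$, the fact that $\phi \lhd^\Gamma \bot$ expresses existence of a $\phi$-world in $[w]_\Gamma$ falls out immediately, and the remaining equivalences reduce to routine manipulations with totality, minimality and the limit assumption, exactly as in Lemma \ref{definable}. A secondary point worth stating explicitly is the convention that $\min_{\unlhd^\Gamma_w}(S)$ is taken over the set $[w]_\Gamma$ on which $\unlhd^\Gamma_w$ is defined, so that worlds outside $[w]_\Gamma$ never count as minimal; this is what makes the vacuous case align on both sides of the biconditional.
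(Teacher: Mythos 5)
Your proof is correct and follows the paper's intended strategy (transplant the argument of Lemma \ref{definable}, swapping $\prec^\Gamma$ for $\lhd^\Gamma$), but with one substantive --- and, as it turns out, necessary --- departure: the paper keeps the antecedent $\Diamond\phi$, whereas you replace it with the relativised possibility $\phi \lhd^\Gamma \bot$. Your version is the right one. The two defining formulas diverge exactly when $W_w$ contains a $\phi$-world but $[w]_\Gamma$ does not: there $\min_{\unlhd^\Gamma_w}(\lb\phi\rb^\M)=\emptyset$, so $\M,w\models_{\AT}[\phi,\Gamma]\psi$ holds vacuously, while $\Diamond\phi$ is true and $(\phi\land\psi)\lhd^\Gamma(\phi\land\lnot\psi)$ is false (its negation $(\phi\land\lnot\psi)\unlhd^\Gamma(\phi\land\psi)$ is vacuously satisfied over $[w]_\Gamma$), so the paper's right-hand side fails. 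The paper's own von Wright example already exhibits this: $\F,w\models_{\AT}[p,\{m,s\}]h$ vacuously, yet $\F,w\models\Diamond p$ and $\F,w\not\models(p\land h)\lhd^{\{m,s\}}(p\land\lnot h)$. The step of the $\NC$ proof that does not survive the ``replace the relation'' instruction is precisely ``there is $x\in W_w$ with $\M,x\models\phi$, hence the minimal set is nonempty,'' which is sound for $\preceq^\Gamma_w$ (a reordering of all of $W_w$) but not for $\unlhd^\Gamma_w$ (a restriction to $[w]_\Gamma$). Your observation that $\phi\lhd^\Gamma\bot$ expresses existence of a $\phi$-world in $[w]_\Gamma$, and your explicit convention that $\min_{\unlhd^\Gamma_w}$ ranges only over the field $[w]_\Gamma$, are exactly what is needed; the remaining manipulations with totality, transitivity and well-foundedness of the restricted order are routine and correct. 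The lemma as stated (mere definability) remains true either way, but the defining formula should be yours rather than the one printed in the paper.
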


\begin{proof} Replace $\preceq_w^\Gamma$ with $\unlhd_w^\Gamma$ in the above proof to show that the following equivalence
\begin{align*}
\mathfrak{M}, w \models_{\AT} [\phi, \Gamma]\psi \text{ iff } \ \mathfrak{M}, w \models \Diamond \phi \rightarrow (\phi \land \psi) \lhd^\Gamma (\phi \land \lnot \psi).
\end{align*}   
holds. \end{proof}

\begin{lemma} \label{definable3}The modal operator $[\phi, \Gamma]\psi$ under $\MS$ semantics is definable in $\mathcal{L}$.
\end{lemma}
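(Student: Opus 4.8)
The plan is to prove the biconditional
\[ \M, w \models_\MS [\phi,\Gamma]\psi \quad\text{iff}\quad \M, w \models \bigwedge_{S\subseteq\Gamma}\Big(\mu_S \rightarrow (\phi\land\psi)\,\lhd^S\,(\phi\land\lnot\psi)\Big), \]
where $\mu_S := (\phi\lhd^S\bot)\land\bigwedge_{\gamma\in\Gamma\setminus S}\lnot(\phi\lhd^{S\cup\{\gamma\}}\bot)$. The first thing to stress is that, unlike the $\NC$ and $\AT$ cases, the defining formula \emph{cannot} be the naïve analogue $\Diamond\phi\rightarrow(\phi\land\psi)\sqsubset^\Gamma(\phi\land\lnot\psi)$ of Lemmas \ref{definable} and \ref{definable2}. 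Those proofs rely crucially on $\preceq^\Gamma_w$ and $\unlhd^\Gamma_w$ being \emph{total}, so that a comparative-possibility statement about the closest $\phi\land\psi$- and $\phi\land\lnot\psi$-worlds records exactly which $\phi$-worlds are minimal. But $\sqsubseteq^\Gamma_w$ is only a partial preorder: two $\phi$-worlds with $\subseteq$-incomparable agreement sets $A^\M_\Gamma(\cdot,w)$ are incomparable, and one can then have $\min_{\sqsubseteq^\Gamma_w}(\lb\phi\rb)\not\subseteq\lb\psi\rb$ while $(\phi\land\psi)\sqsubset^\Gamma(\phi\land\lnot\psi)$ nonetheless holds. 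The translation must therefore be sensitive to the whole lattice of agreement sets, which is the source of the exponential blow-up advertised in the Theorems section.

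First I would record the structural fact on which everything rests. Since $\Gamma$ is finite, $\sqsubseteq^\Gamma_w$ is a well-founded preorder (it orders worlds first by reverse inclusion of agreement sets and then, within a class, by the well-founded total $\preceq_w$), so minimal sets are non-empty whenever $\lb\phi\rb\cap W_w\neq\emptyset$. Unfolding the definition gives $y\sqsubset^\Gamma_w x$ iff $A^\M_\Gamma(x,w)\subsetneq A^\M_\Gamma(y,w)$, or $A^\M_\Gamma(x,w)=A^\M_\Gamma(y,w)$ and $y\prec_w x$. Hence a $\phi$-world $x$ is $\sqsubseteq^\Gamma_w$-minimal iff (i) $S:=A^\M_\Gamma(x,w)$ is $\subseteq$-maximal among the agreement sets of $\phi$-worlds and (ii) $x$ is $\preceq_w$-least among the $\phi$-worlds whose agreement set is exactly $S$. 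The key observation is that when $S$ is maximal in this sense, every $\phi$-world in $[w]_S$ has agreement set \emph{exactly} $S$ (it is $\supseteq S$ by definition of $[w]_S$, and not $\supsetneq S$ by maximality), so clause (ii) is precisely $\unlhd^S_w$-minimality. This yields the decomposition $\min_{\sqsubseteq^\Gamma_w}(\lb\phi\rb)=\bigcup_{S}\min_{\unlhd^S_w}(\lb\phi\rb)$, the union taken over the $\subseteq$-maximal agreement sets $S$ of $\phi$-worlds.

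It then remains to express each ingredient in $\mathcal{L}$. Reading the semantics of $\unlhd^S$, one checks that $\phi\lhd^S\bot = \lnot(\bot\unlhd^S\phi)$ holds at $w$ exactly when $[w]_S\cap\lb\phi\rb\neq\emptyset$; hence $\mu_S$ says precisely that $S$ is a maximal agreement set among $\phi$-worlds, it being enough to test immediate supersets $S\cup\{\gamma\}$. For a fixed such $S$, ``every $\unlhd^S_w$-minimal $\phi$-world is a $\psi$-world'' is the $\AT$ statement $[\phi,S]\psi$, which by the argument of Lemma \ref{definable2} is captured by $(\phi\land\psi)\lhd^S(\phi\land\lnot\psi)$; under the guard $\mu_S$ there is a $\phi$-world in $[w]_S$, so the potential vacuity mismatch does not arise. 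Assembling the pieces, $\min_{\sqsubseteq^\Gamma_w}(\lb\phi\rb)\subseteq\lb\psi\rb$ holds iff for every maximal $S$ the $\unlhd^S_w$-minimal $\phi$-worlds lie in $\lb\psi\rb$, which is exactly the displayed conjunction; conjuncts with non-maximal $S$ are vacuously true via $\mu_S$, and when $\lb\phi\rb\cap W_w=\emptyset$ every antecedent $\phi\lhd^S\bot$ fails, matching the vacuous truth of $[\phi,\Gamma]\psi$.

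The main obstacle is conceptual rather than computational: recognising that the partiality of $\sqsubseteq^\Gamma_w$ defeats the one-line comparative-possibility translation and forces the case analysis over the $2^{|\Gamma|}$ agreement sets, together with getting the maximality guard $\mu_S$ right. Imposing the $\AT$ condition for a non-maximal $S$ would wrongly constrain $\phi$-worlds that are \emph{not} $\sqsubseteq^\Gamma_w$-minimal (they are dominated by worlds of larger agreement), so the implication $\mu_S\rightarrow\cdots$, rather than an unconditional conjunct, is essential. The remaining verifications — transitivity and well-foundedness of $\sqsubseteq^\Gamma_w$, and the semantic unfoldings of $\phi\lhd^S\bot$ and of $\lhd^S$ — are routine.
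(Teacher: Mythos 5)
Your proposal is correct, but it takes a genuinely different route from the paper --- and your diagnosis of why the obvious route fails appears to be right. The paper's own proof simply asserts that the $\NC$/$\AT$ argument goes through with $\sqsubseteq^\Gamma_w$ in place of $\preceq^\Gamma_w$, yielding the one-line translation $\Diamond\phi\rightarrow(\phi\land\psi)\sqsubset^\Gamma(\phi\land\lnot\psi)$ with $\sqsubseteq^\Gamma$ taken as a primitive of $\mathcal{L}$. But the right-to-left direction of that argument relies on totality: from a $\sqsubseteq^\Gamma_w$-minimal $\phi\land\lnot\psi$-world one must conclude that it lies below \emph{every} $\phi\land\psi$-world, which fails when agreement sets are $\subseteq$-incomparable. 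Concretely, take $\Gamma=\{a,b\}$ with $\M,w\models a\land b$, and let the only $\phi$-worlds in $W_w$ be a $\phi\land\psi$-world $u$ with $A^\M_\Gamma(u,w)=\{a\}$ and a $\phi\land\lnot\psi$-world $x$ with $A^\M_\Gamma(x,w)=\{b\}$: both are $\sqsubseteq^\Gamma_w$-minimal, so $\M,w\not\models_\MS[\phi,\Gamma]\psi$, yet $u$ witnesses $(\phi\land\psi)\sqsubset^\Gamma(\phi\land\lnot\psi)$ because $x\not\sqsubseteq^\Gamma_w u$. Your translation repairs this by decomposing $\min_{\sqsubseteq^\Gamma_w}(\lb\phi\rb^\M)$ into the union of the sets $\min_{\unlhd^S_w}(\lb\phi\rb^\M)$ over the $\subseteq$-maximal agreement sets $S$ realised by $\phi$-worlds, and guarding each conjunct with the maximality test $\mu_S$. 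I checked the individual steps --- that $\phi\lhd^S\bot$ expresses $[w]_S\cap\lb\phi\rb^\M\neq\emptyset$, that $\mu_S$ captures maximality (testing immediate supersets $S\cup\{\gamma\}$ suffices by monotonicity), that maximality of $S$ forces every $\phi$-world in $[w]_S$ to have agreement set exactly $S$ so that $\unlhd^S_w$-minimality coincides with the second clause of $\sqsubseteq^\Gamma_w$-minimality, and that each guarded conjunct reduces to the $\AT$ case of Lemma \ref{definable2} without vacuity issues --- and they are sound; finiteness of $\Gamma$ guarantees a maximal $S$ exists whenever $\Diamond\phi$ holds, so the empty case also matches. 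Beyond correctness, your route lands directly in the $\unlhd^S$-fragment, so the further reduction to $\mathcal{L}^-$ needs only Lemma \ref{lemma:expressible2} rather than Lemma \ref{lem5}; the cost is an explicit case split over the $2^{|\Gamma|}$ subsets, an exponential blow-up the paper incurs anyway in the $\Gamma^*$ construction.
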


\begin{proof} Replace $\preceq_w^\Gamma$ with $\sqsubseteq_w^\Gamma$ in the above proof to show that the following equivalence holds
\begin{align*}
\mathfrak{M}, w \models_{\MS} [\phi, \Gamma]\psi \text{ iff } \ \mathfrak{M}, w \models \Diamond \phi \rightarrow (\phi \land \psi) \sqsubset^\Gamma (\phi \land \lnot \psi).
\end{align*}    \end{proof}

Denote by $\mathcal{L}^-$ the $\mathcal{L}$-fragment given by
\begin{align*} \phi \ ::= \ p \ | \ \bot \ | \ \lnot \phi \ | \ \phi \lor \psi \ | \ \phi \preceq \psi. \end{align*}

Given a set $\Gamma \subseteq \mathcal{L}$ or $\Gamma \subseteq \mathcal{L}^-$, let $\Gamma^*$ be the set of all possible conjunctions of formulas and negated formulas from $\Gamma$; that is, the set of all $\psi$ such that $\psi = \bigwedge \limits_{\gamma \in \Gamma} \pm \gamma$, where $+\gamma = \gamma$ and $- \gamma = \lnot \gamma$ . So if $\Gamma = \{p, \lnot q\}$ then \begin{align*} \Gamma^*= \{p\land \lnot q, \lnot p \land \lnot q, p \land \lnot \lnot q, \lnot p \land \lnot \lnot q\}. \end{align*}
We will often identity a conjunction $\phi_1 \land \dots \land \phi_n$ with the set $\{\phi_1, \dots, \phi_n\}$.
\begin{lemma} \label{lemma:expressible2}
The modal operator $\unlhd^\Gamma$ of $\mathcal{L}$ is definable in $\mathcal{L}^-$.
\end{lemma}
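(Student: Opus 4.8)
The plan is to exhibit, for arbitrary $\phi$, $\psi$, and finite $\Gamma \subseteq \mathcal{L}^-$, an explicit $\mathcal{L}^-$-formula equivalent to $\phi \unlhd^\Gamma \psi$, built from the set $\Gamma^*$ of complete conjunctions over $\Gamma$. The candidate I would verify is
\[ \phi \unlhd^\Gamma \psi \ \equiv \ \bigvee_{\delta \in \Gamma^*} \big( \delta \land (\phi \land \delta) \preceq (\psi \land \delta) \big). \]
Since its only modal operator is the plain $\preceq$ and everything else is a Boolean combination of $\mathcal{L}^-$-formulas (reading $\land$ as its usual abbreviation $\lnot(\lnot\cdot\lor\lnot\cdot)$), the right-hand side lies in $\mathcal{L}^-$, which is exactly what the lemma demands.

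First I would observe that each world $w$ satisfies a unique $\delta_w \in \Gamma^*$, namely the conjunction recording, for every $\gamma \in \Gamma$, whether $\M, w \models \gamma$. By the definitions of $\equiv_\Gamma$ and $[w]_\Gamma$ this gives $[w]_\Gamma = \{u \in W_w : \M, u \models \delta_w\}$: a $w$-entertainable world agrees with $w$ on all of $\Gamma$ precisely when it satisfies $\delta_w$. Moreover $\unlhd^\Gamma_w$ is by definition just $\preceq_w$ restricted to $[w]_\Gamma \times [w]_\Gamma$, so the two relations coincide on the worlds that matter.

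The heart of the argument is the claim that $\M, w \models \phi \unlhd^\Gamma \psi$ iff $\M, w \models (\phi \land \delta_w) \preceq (\psi \land \delta_w)$. I would prove this by unfolding both comparative-possibility clauses and checking that the guard ``if $u \in \lb \psi \land \delta_w \rb^\M$'' in the plain $\preceq$ clause confines the universally quantified $u$ to exactly the $\psi$-worlds of $[w]_\Gamma$, while a witness $v \in \lb \phi \land \delta_w \rb^\M$ is precisely a $\phi$-world of $[w]_\Gamma$ with $v \preceq_w u$, that is $v \unlhd^\Gamma_w u$. Whenever the antecedent of the embedded implication fails --- either because $u \notin [w]_\Gamma$, so $u \not\models \delta_w$, or because $u \in [w]_\Gamma$ but $u \not\models \psi$ --- I would discharge the existential by taking $v := u$; this is legitimate since $[w]_\Gamma$ always contains $w$ and hence is nonempty. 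Finally, because $\delta_w$ is the unique member of $\Gamma^*$ true at $w$, every disjunct of the candidate formula except the one indexed by $\delta_w$ is killed by its guard, so the disjunction collapses at $w$ to $(\phi \land \delta_w) \preceq (\psi \land \delta_w)$, and the claim transfers the equivalence to the full disjunction.

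I expect the only delicate point to be the bookkeeping in this central claim: the clauses have the shape $\forall u \exists v$ with an embedded implication, so I must match the ranges of $u$ and $v$ across $\unlhd^\Gamma_w$ and $\preceq_w$ and supply an explicit witness in the vacuous cases rather than invoke one implicitly. The case split over $\Gamma^*$ is then routine, the price being a right-hand side exponential in $|\Gamma|$, consistent with the blow-up already noted for the translation described in the theorems section.
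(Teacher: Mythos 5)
Your proposal is correct and follows essentially the same route as the paper: both reduce $\phi \unlhd^\Gamma \psi$ at $w$ to $(\phi \land \delta_w) \preceq (\psi \land \delta_w)$ for the unique $\delta_w \in \Gamma^*$ true at $w$, your guarded disjunction $\bigvee_{\delta \in \Gamma^*}(\delta \land \cdots)$ being pointwise equivalent to the paper's guarded conjunction $\bigwedge_{\gamma \in \Gamma^*}(\gamma \rightarrow \cdots)$ since exactly one member of $\Gamma^*$ holds at each world. The witness-matching argument you sketch, including supplying $v := u$ in the vacuous cases, is the same as the paper's.
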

\begin{proof}
We show that \begin{align}
\phi \unlhd^\Gamma \psi \leftrightarrow \bigwedge \limits_{\gamma \in \Gamma^*} \left[ \gamma \rightarrow (\phi \land \gamma) \preceq (\psi \land \gamma) \right].
\end{align}
$\Rightarrow:$ Without loss of generality write $\M, w \models \gamma$. Let $u \in W_w$ and suppose $\M, u \models \psi \land \gamma$. By hypothesis there exists $v \in [w]_\Gamma$ such that $\M, v \models \phi$ and $v \unlhd_w^\Gamma u$. Now $v \equiv_\Gamma w$, so $\M, v \models \gamma$, and $v \preceq_w u$ as required.

\noindent $\Leftarrow:$ Write $\M, w \models \gamma$. Then $\M, w \models (\phi \land \gamma) \preceq (\psi \land \gamma)$. Let $u \in [w]_\Gamma$ and suppose that $\M, u \models \psi$. Then $\M, u \models \psi \land \gamma$, so there exists $v \in W_w$ with $\M, v \models \phi \land \gamma$ and $v \preceq_w u$. Then $v \equiv_\Gamma w$, and so $v \unlhd_w^\Gamma u$.
  \end{proof}

\begin{lemma} \label{lem5}
The modal operator $\sqsubseteq^\Gamma$ of $\mathcal{L}$ is expressible in $\mathcal{L}^-$.
\end{lemma}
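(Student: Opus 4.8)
The plan is to mirror the proof of Lemma~\ref{lemma:expressible2}, but to account for the fact that $\sqsubseteq^\Gamma_w$ compares agreement sets by \emph{inclusion} rather than by full agreement. Unlike $\unlhd^\Gamma_w$, which merely restricts $\preceq_w$ to the single class $[w]_\Gamma$, the relation $\sqsubseteq^\Gamma_w$ lets a world with a strictly larger agreement set dominate another \emph{regardless} of $\preceq$-distance. The translation must therefore (i) fix the complete $\Gamma$-type of $w$, exactly as the outer conjunction $\bigwedge_{\gamma \in \Gamma^*}[\gamma \rightarrow \cdots]$ does in Lemma~\ref{lemma:expressible2}, and (ii) for each possible type of a $\psi$-world, offer two alternative ways of dominating it. Writing, for $\sigma, \delta \in \Gamma^*$, $A(\sigma,\delta)$ for the set of $\gamma \in \Gamma$ to which $\sigma$ and $\delta$ assign the same sign (so that a $\sigma$-world $u$ and a $\delta$-world $w$ satisfy $A_\Gamma^\M(u,w) = A(\sigma,\delta)$), I would establish the equivalence
\[
\phi \sqsubseteq^\Gamma \psi \ \leftrightarrow \ \bigwedge_{\delta \in \Gamma^*}\Big(\delta \rightarrow \bigwedge_{\sigma \in \Gamma^*}\big(\bigvee_{\sigma' :\, A(\sigma',\delta) \supsetneq A(\sigma,\delta)} \Diamond(\phi \wedge \sigma') \ \vee\ (\phi \wedge \sigma) \preceq (\psi \wedge \sigma)\big)\Big).
\]
The right-hand side lies in $\mathcal{L}^-$, since $\Diamond$ is definable from $\preceq$ and each $\sigma,\sigma',\delta \in \Gamma^*$ is a Boolean combination of $\mathcal{L}^-$-formulas.

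For the forward direction I would assume $\M, w \models \phi \sqsubseteq^\Gamma \psi$ and let $\delta \in \Gamma^*$ be the unique type realized by $w$; it suffices to verify, for each $\sigma \in \Gamma^*$, the corresponding inner disjunction. If $(\phi \wedge \sigma) \preceq (\psi \wedge \sigma)$ already holds, that clause is discharged. Otherwise there is a $(\psi \wedge \sigma)$-world $u \in W_w$ such that no $(\phi \wedge \sigma)$-world is $\preceq_w u$; feeding this $u$ into the hypothesis yields a $\phi$-world $v$ with $v \sqsubseteq^\Gamma_w u$. Since worlds with the same agreement set with respect to $w$ realize the same element of $\Gamma^*$, a witness with $A_\Gamma^\M(v,w) = A_\Gamma^\M(u,w)$ would be a same-type $(\phi \wedge \sigma)$-world $\preceq_w u$, contradicting the choice of $u$; hence $A_\Gamma^\M(u,w) \subsetneq A_\Gamma^\M(v,w)$, so $v$ realizes some $\sigma'$ with $A(\sigma',\delta) \supsetneq A(\sigma,\delta)$ and $\Diamond(\phi \wedge \sigma')$ holds. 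As the left disjunct does not depend on $u$, the clause for $\sigma$ is established.

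For the converse I would assume the right-hand side at $w$, let $\delta$ be the active type of $w$, and take an arbitrary $\psi$-world $u \in W_w$, of type $\sigma$. The $\sigma$-clause splits into two cases: if some $\Diamond(\phi \wedge \sigma')$ with $A(\sigma',\delta) \supsetneq A(\sigma,\delta)$ holds, the resulting $\phi$-world $v$ has $A_\Gamma^\M(v,w) \supsetneq A_\Gamma^\M(u,w)$, whence $v \sqsubseteq^\Gamma_w u$; if instead $(\phi \wedge \sigma) \preceq (\psi \wedge \sigma)$ holds, then $u$ has a same-type $(\phi \wedge \sigma)$-world $v \preceq_w u$, so again $v \sqsubseteq^\Gamma_w u$. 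In either case $u$ is dominated, giving $\M, w \models \phi \sqsubseteq^\Gamma \psi$. I expect the main obstacle to be precisely the case analysis in the forward direction: recognizing that a failure of same-type $\preceq$-domination \emph{forces} the $\sqsubseteq$-witness into a strictly larger agreement class, and that such distance-independent domination is faithfully captured by a bare $\Diamond$ rather than by a $\preceq$-comparison. The remaining work is bookkeeping---correctly enumerating the strictly-larger types $\sigma'$ relative to $w$'s type $\delta$---which is what renders the translation exponential in $|\Gamma|$.
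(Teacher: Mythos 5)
Your proposal is correct and takes essentially the same approach as the paper's proof: guard by the $\Gamma^*$-type of $w$, and for each agreement class require either a bare $\Diamond$-witness of a $\phi$-world with strictly larger agreement set or $\preceq$-domination within the same agreement class. The only difference is a harmless reparametrization—you index by complete types $\sigma \in \Gamma^*$ and write $\phi \wedge \sigma$, whereas the paper indexes by sub-conjunctions $\lambda \subseteq \gamma$ of $w$'s type and writes $\phi \wedge \lambda$; these are in bijection once $w$'s type is fixed, and both translations are exponential in $|\Gamma|$.
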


\begin{proof}
We show that 
\begin{align} \label{eq89}
\phi \sqsubseteq^\Gamma \psi \leftrightarrow \bigwedge \limits_{\gamma\in \Gamma^*} \Big( \gamma \rightarrow \bigwedge \limits_{\lambda \subseteq \gamma}\Big[\bigwedge\limits_{\lambda \subset \lambda' \subseteq \gamma} \lnot \Diamond(\phi\land\lambda') \rightarrow (\phi\land\lambda) \preceq (\psi\land\lambda)\Big]\Big)
\end{align}

\noindent $\Rightarrow:$ Suppose $\M, w \models \phi \sqsubseteq^\Gamma \psi$ with $\M, w \models \gamma$, for some $\gamma \in \Gamma^*$. Take $\lambda\subseteq \gamma$ such that 
\begin{equation}\label{eq:a}\M, w \models \bigwedge\limits_{\lambda \subset \lambda' \subseteq \gamma} \lnot \Diamond(\phi\land\lambda').
\end{equation}
Take $v \in W_w$ arbitrary such that $\M, v \models \psi \land
\lambda$. By the hypothesis there is $u \in W_w$ such that $u \sqsubseteq^\Gamma_w v$ and $\M, u \models \phi$. 

Now, $u \sqsubseteq^\Gamma_w v$ implies that 
\begin{center}
$(\dagger)$ \qquad \text{either} $A_\Gamma^\M(v, w) \subset A_\Gamma^\M(u, w),$ \text{or} $A_\Gamma^\M(v, w) = A_\Gamma^\M(u, w)$ and $u \preceq_w v$.
\end{center}


If $A_\Gamma^\M(v,w) \subset A_\Gamma^\M(u,w)$, then $\lambda \subseteq A_\Gamma^\M(v,w)$ implies that $\lambda \subset A_\Gamma^\M(u,w)$. Furthermore, $\M, u\models \phi\land A_\Gamma^\M(u,w)$.  Take $\lambda' := A_\Gamma^\M(u,w)$, then $\M, u \models \phi \land \lambda'$, and hence $$\M, w\models \lnot \bigwedge\limits_{\lambda \subset \lambda' \subseteq \gamma} \lnot \Diamond(\phi\land\lambda'),$$ contradicting (\ref{eq:a}). 


Thus by ($\dagger$), $A_\Gamma^\M(v,w) = A_\Gamma^\M(u,w)$ and $u\preceq_w v$. Finally, since $\lambda \subseteq A_\Gamma^\M(v,w)$ and $\M, u\models \phi \land A_\Gamma^\M(u,w)$, we have that $M, u\models \phi \land \lambda$, as desired. 

\noindent $\Leftarrow:$ Assume the right-hand side of (\ref{eq89}). There is a unique $\gamma \in \Gamma^*$ for which $\M, w \models \gamma$. Take $u\in W_w$ such that $\M, u\models \psi$, and consider $A_\Gamma^\M(u,w)$. 

\paragraph{Case 1} There is an $x\in W_w$ such that $A_\Gamma^\M(u,w) \subset A_\Gamma^\M(x,w)$ and $M, x\models \phi$. Then $x\sqsubseteq_w^\Gamma u$, by definition of $\sqsubseteq_w^\Gamma$.

\paragraph{Case 2} There is no $x\in W_w$ such that $A_\Gamma^\M(u,w) \subset A_\Gamma^\M(x,w)$ and $M, x\models \phi$. Now, if there is $y\in W_w$ and a set of formulas $\lambda'$ with $A_\Gamma^\M(u,w) \subset \lambda' \subseteq \gamma$ such that $\M, y\models \phi \land \lambda'$, then $A_\Gamma^\M(u,w) \subset \lambda' \subseteq A_\Gamma^\M(y,w)$ and $M, y\models \phi$, contradicting our assumption. Hence
$$\M, w\models \bigwedge\limits_{A_\Gamma^\M(u,w) \subset \lambda' \subseteq \gamma} \lnot \Diamond(\phi\land\lambda'),$$
and by taking $\lambda := A^\M_\Gamma(w, u)$, our initial assumption implies that
$$\M, w\models (\phi \land A_\Gamma^\M(u,w))\preceq(\psi \land A_\Gamma^\M(u,w)).$$

Since $\M, u\models \psi \land A_\Gamma^\M(u,w)$, there is an $x\preceq_w u$ such that $\M, x\models \phi \land A_\Gamma^\M(u,w)$. Hence $A_\Gamma^\M(u,w) \subseteq A_\Gamma^\M(x,w)$, and also $A_\Gamma^\M(x,w) \subseteq A_\Gamma^\M(u,w)$ as the containment cannot be proper by the case assumption. So $A_\Gamma^\M(u,w) = A_\Gamma^\M(x,w)$, and since $x\preceq_w u$, one has that $x\sqsubseteq_w^\Gamma u$. 

Hence, both cases imply that there exists an $x\sqsubseteq_w^\Gamma u$ such that $\M, x\models \phi$, as desired.
\end{proof}

\begin{lemma} \label{lem6}
The modal operator $\preceq^\Gamma$ of $\mathcal{L}$ is definable in $\mathcal{L}^-$.
\end{lemma}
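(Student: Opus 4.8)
The plan is to give, exactly in the style of Lemmas \ref{lemma:expressible2} and \ref{lem5}, an explicit $\mathcal{L}^-$-formula equivalent to $\phi \preceq^\Gamma \psi$, obtained by stratifying $W_w$ according to how many formulas of $\Gamma$ each world agrees with $w$ on. Write $n = |\Gamma|$, and for $\delta,\gamma \in \Gamma^*$ let $\mathrm{agr}(\delta,\gamma)$ denote the number of elements of $\Gamma$ on which $\delta$ and $\gamma$ carry the same sign. For each $\gamma \in \Gamma^*$ and each $0 \le k \le n$ set $\theta^\gamma_k := \bigvee\{\delta \in \Gamma^* : \mathrm{agr}(\delta,\gamma)=k\}$ and $\theta^\gamma_{>k} := \bigvee\{\delta \in \Gamma^* : \mathrm{agr}(\delta,\gamma)>k\}$; since each world satisfies exactly one member of $\Gamma^*$, a world $x$ satisfies $\theta^\gamma_k$ iff $|A^\M_\Gamma(x,w)| = k$ whenever $w \models \gamma$, and these abbreviations lie in $\mathcal{L}^-$ (as do all members of $\Gamma^*$, under the operative assumption that $\Gamma \subseteq \mathcal{L}^-$). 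The equivalence I will establish is
\[ \phi \preceq^\Gamma \psi \ \leftrightarrow\ \bigwedge_{\gamma \in \Gamma^*}\Big[\gamma \rightarrow \bigwedge_{k=0}^{n}\big(\Diamond(\phi \wedge \theta^\gamma_{>k}) \ \vee\ (\phi \wedge \theta^\gamma_k)\preceq(\psi \wedge \theta^\gamma_k)\big)\Big]. \]

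The conceptual picture driving the verification is that $\preceq^\Gamma_w$ is the lexicographic refinement that first sorts worlds by agreement level (more agreement strictly closer) and only then breaks ties by $\preceq_w$. I would prove the biconditional directly from the $\forall u\,\exists v$ clause of Definition \ref{def:sem-sim}, conditioning on the unique $\gamma \in \Gamma^*$ true at $w$, so that no appeal to minima or the limit assumption is required. For ($\Rightarrow$): fixing $k$ and assuming $\lnot\Diamond(\phi \wedge \theta^\gamma_{>k})$, I take any $u \models \psi \wedge \theta^\gamma_k$ and use $\phi \preceq^\Gamma \psi$ to obtain $v \models \phi$ with $v \preceq^\Gamma_w u$; since no $\phi$-world sits above level $k$, $v$ must lie at level exactly $k$ with $v \preceq_w u$, which gives $(\phi \wedge \theta^\gamma_k)\preceq(\psi \wedge \theta^\gamma_k)$. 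For ($\Leftarrow$): given any $u \models \psi$ of agreement level $k$, I consult the $k$-th conjunct; its first disjunct supplies a $\phi$-world above level $k$ (hence $\preceq^\Gamma_w u$, because higher agreement is strictly closer), while its second disjunct supplies a same-level $\phi$-world $v \preceq_w u$ (hence again $v \preceq^\Gamma_w u$). Either way $u$ is dominated by a $\phi$-world, which is exactly $\phi \preceq^\Gamma \psi$.

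The main subtlety, and the step I would be most careful about, is the precise shape of each conjunct: the disjunction between an \emph{unconditional} $\Diamond$ at higher levels and a \emph{plain} $\preceq$ within a single level. The tempting one-line simplification of each conjunct to $(\phi \wedge (\theta^\gamma_{>k}\vee\theta^\gamma_k))\preceq(\psi \wedge \theta^\gamma_k)$ is incorrect: a $\phi$-world at a level strictly above $k$ already beats a level-$k$ $\psi$-world in $\preceq^\Gamma_w$ irrespective of the base order $\preceq_w$, so demanding $v \preceq_w u$ there is too strong and would wrongly falsify valid instances. Isolating the higher levels under a bare $\Diamond$, while using $\preceq$ only for the tie-breaking comparison inside one level, is exactly what makes the equivalence hold; the remaining bookkeeping (that $\psi$-worlds at lower levels are automatically handled, and that vacuity of $\psi \wedge \theta^\gamma_k$ renders the corresponding conjunct trivially true) falls out of treating each $\psi$-world through the conjunct indexed by its own level.
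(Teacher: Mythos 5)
Your proof is correct, but it does not simply reproduce the paper's argument, and the difference is substantive. The paper's proof of Lemma~\ref{lem6} is a one-line instruction: take the equivalence (\ref{eq89}) for $\sqsubseteq^\Gamma$ from Lemma~\ref{lem5}, replace the subset condition $\lambda \subset \lambda' \subseteq \gamma$ by the cardinality condition $|\lambda| < |\lambda'| \le |\gamma|$, and repeat the argument --- which, read literally, keeps the consequent $(\phi\land\lambda)\preceq(\psi\land\lambda)$ for each specific sub-conjunction $\lambda \subseteq \gamma$. Your translation instead uses the exact-agreement-level formulas $\theta^\gamma_k$, and this is more than cosmetic: in the counting semantics the tie-breaking clause of $\preceq^\Gamma_w$ demands only $|A^\M_\Gamma(v,w)| = |A^\M_\Gamma(u,w)|$, not set equality, so the witness $v$ obtained for a $\psi\land\lambda$-world $u$ need not itself satisfy $\lambda$ --- it may agree with $w$ on a \emph{different} set of the same size. (Concretely, with $\Gamma = \{a,b\}$, $w \models a\land b$, a sole $\phi$-world of type $\lnot a \land b$ and a sole $\psi$-world of type $a \land \lnot b$, both at level $1$ and with the $\phi$-world $\preceq_w$-below, the left-hand side holds while the conjunct for $\lambda = \{a\}$ fails; the set-equality tie-break is precisely what rescues the analogous step in Lemma~\ref{lem5}, and it is unavailable here.) Your $\theta^\gamma_k$-formulation sidesteps this, your forward and backward verifications are sound, and your remark that the higher-level case must be isolated under a bare $\Diamond$ rather than folded into the $\preceq$-comparison is exactly the right caution. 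In short, your proof is a correct and more careful rendering of what the paper's substitution recipe gestures at.
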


\begin{proof}
Replace the subset condition 
\begin{align*}
\lambda \subset \lambda' \subseteq \gamma
\end{align*}
in (\ref{eq89}) with the cardinality condition
\begin{align*}
|\lambda| < |\lambda'| \le |\gamma|
\end{align*}
and repeat the above process.
\end{proof}

Notice that, if $\Gamma \cup \{\phi, \psi\} \subseteq \mathcal{L}^-$, then the right hand sides of the equivalences established above are in $\mathcal{L}^-$. This allows us to apply the translation to a formula from the inside-out, the resulting formula belonging to $\mathcal{L}^-$.

By a \emph{conditional frame} we mean a pair $F = (W, \preceq)$, such that $(F, V)$ is a conditional model for any valuation function $V$. Let $\mathfrak{C}$ be the class of conditional frames. Using the notation from~\cite{RefWorks:130}, we write $\Lambda^\mathfrak{L}_\mathfrak{C}$ for the set of $\mathfrak{L}$-formulas valid over $\mathfrak{C}$.

\begin{theorem}
The logic $\Lambda^{\mathcal{L}}_\mathfrak{C}$ is complete.
\end{theorem}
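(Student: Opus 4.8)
The plan is to reduce the full comparative‑possibility language $\mathcal{L}$ to its base fragment $\mathcal{L}^-$, whose logic over $\mathfrak{C}$ is already Lewis's comparative possibility logic and hence known to be completely axiomatisable, and then to transfer that axiomatisation back along the reduction. First I would define a translation $t\colon \mathcal{L} \to \mathcal{L}^-$ that commutes with the Boolean connectives and with the unsuperscripted operator $\preceq$ (so $t(\phi \preceq \psi) = t(\phi) \preceq t(\psi)$), and that eliminates each superscripted operator $\unlhd^\Gamma$, $\preceq^\Gamma$, $\sqsubseteq^\Gamma$ by replacing it with the right‑hand side of the equivalence established in Lemma~\ref{lemma:expressible2}, Lemma~\ref{lem6}, and Lemma~\ref{lem5} respectively, after first translating the constituent formulas and every member of $\Gamma$. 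This recursion is well‑founded because the members of $\Gamma$ are drawn from a strictly lower level $\mathcal{L}_\beta$ of the hierarchy of Definition~\ref{fulldefn}, so the translation can be carried out from the inside out: as observed immediately after Lemma~\ref{lem6}, once $\Gamma \cup \{\phi,\psi\} \subseteq \mathcal{L}^-$ the right‑hand sides of the three equivalences already lie in $\mathcal{L}^-$, which guarantees that each elimination step lands back in $\mathcal{L}^-$ and that the procedure terminates.

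Second, I would verify by induction on this well‑founded rank that $t$ preserves truth at every pointed conditional model, i.e.\ $\M, w \models \phi$ iff $\M, w \models t(\phi)$ for all $\M$ and $w$. The Boolean and $\preceq$ steps are immediate from the inductive hypothesis, and the three superscripted cases are exactly the content of the expressibility lemmas. It follows that $\phi$ is valid over $\mathfrak{C}$ iff $t(\phi)$ is valid over $\mathfrak{C}$.

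Third, I would observe that $\mathcal{L}^-$, with $\preceq$ interpreted as in Definition~\ref{def:sem-sim}, is precisely Lewis's language of comparative possibility, and that the conditional frames in $\mathfrak{C}$ (reflexive, transitive, total, strongly centred, satisfying the limit assumption) are exactly the class for which this comparative‑possibility logic is Lewis's system $\mathbf{VC}$, whose completeness is a classical result~\cite{RefWorks:170}. I would then axiomatise $\Lambda^{\mathcal{L}}_{\mathfrak{C}}$ by taking the axioms and rules of $\mathbf{VC}$ together with the three equivalences of Lemmas~\ref{lemma:expressible2}, \ref{lem5}, and~\ref{lem6} as reduction‑axiom schemes. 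Soundness is routine, since every added axiom is valid. For completeness, suppose $\phi \in \mathcal{L}$ is valid; then $t(\phi) \in \mathcal{L}^-$ is valid, hence provable in $\mathbf{VC}$, and the reduction axioms together with replacement of provable equivalents yield $\vdash \phi \leftrightarrow t(\phi)$, whence $\vdash \phi$.

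The main obstacle is the bookkeeping behind the reduction rather than any single hard inference: I must make precise the ordinal stratification of Definition~\ref{fulldefn} on which the translation recurses, confirm that substituting translated subformulas inside the superscript sets $\Gamma$ never reintroduces a superscripted operator, and check that $\mathbf{VC}$'s rule of replacement of equivalents is strong enough to push the reduction axioms through arbitrary contexts so that $\vdash \phi \leftrightarrow t(\phi)$ really is derivable. The size blow‑up of the translation—exponential in $|\Gamma|$, since $\Gamma^*$ has $2^{|\Gamma|}$ members—is harmless for completeness but should be flagged when arguing termination.
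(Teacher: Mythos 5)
Your proposal is correct and follows essentially the same route as the paper: the paper's own proof simply takes the axioms of Lewis's $\mathbf{VC}$ together with the equivalences of Lemmas \ref{lemma:expressible2}, \ref{lem5}, and \ref{lem6} as reduction axioms, relying on the inside-out translation into $\mathcal{L}^-$ noted just before the theorem. You merely spell out in more detail the well-foundedness of the recursion and the transfer of provability along the translation, which the paper leaves implicit.
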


\begin{proof}
We take as our axiomatisation the axioms for VC \cite{RefWorks:170}, plus the translations from Lemmas \ref{lemma:expressible2}, \ref{lem5}, and \ref{lem6}. 
\end{proof}

\begin{corollary} \label{col1}
The logic $\Lambda^{\mathcal{L}_{\CP}}_\mathfrak{C}$ is complete for $\AT/\NC/\MS$-semantics.
\end{corollary}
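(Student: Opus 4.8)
The plan is to transfer completeness from $\mathcal{L}$ to $\mathcal{L}_\CP$ by composing the definability lemmas into a single truth-preserving translation, one for each interpretation $\mathbf{X} \in \{\AT, \NC, \MS\}$. Since the completeness of $\Lambda^{\mathcal{L}}_\mathfrak{C}$ has already been established, it suffices to (i) define a translation $\tr_\mathbf{X} : \mathcal{L}_\CP \to \mathcal{L}^-$ satisfying $\M, w \models_\mathbf{X} \alpha$ iff $\M, w \models \tr_\mathbf{X}(\alpha)$ for every conditional model $\M$ and world $w$, and (ii) fix an axiomatisation of $\Lambda^{\mathcal{L}_\CP}_\mathfrak{C}$ in which each step of this translation is a provable equivalence.

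First I would define $\tr_\mathbf{X}$ by recursion on the rank of formulas provided by Definition \ref{fulldefn}, proceeding inside-out. The Boolean cases are homomorphic. For a formula $[\phi, \Gamma]\psi$ I first translate $\phi$, $\psi$, and every member of $\Gamma$ (each of strictly smaller rank, since $\Gamma \subseteq \mathcal{L}_\beta$ with $\beta < \alpha$), obtaining $\mathcal{L}^-$-formulas; write $\Gamma'$ for the translated clause. I then apply the appropriate definability lemma -- Lemma \ref{definable2} for $\AT$, Lemma \ref{definable} for $\NC$, Lemma \ref{definable3} for $\MS$ -- to rewrite the outer modality as a comparative-possibility formula involving $\lhd^{\Gamma'}$, $\prec^{\Gamma'}$, or $\sqsubset^{\Gamma'}$ respectively. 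Finally, since each strict operator abbreviates the negation of a reversed non-strict one, I eliminate the remaining $\Gamma'$-superscripted operator using Lemma \ref{lemma:expressible2} (for $\unlhd^{\Gamma'}$), Lemma \ref{lem6} (for $\preceq^{\Gamma'}$), or Lemma \ref{lem5} (for $\sqsubseteq^{\Gamma'}$). The remark following Lemma \ref{lem6} guarantees that, because $\phi$, $\psi$, and the members of $\Gamma'$ are already in $\mathcal{L}^-$, the output lies in $\mathcal{L}^-$; the recursion terminates by well-foundedness of rank. (The $\Gamma^*$ construction makes $\tr_\mathbf{X}$ exponential in $|\Gamma|$, but this affects only the size, not the existence, of the translation.)

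Next I would verify (i) by induction on rank. The Boolean steps are immediate, and the modal step is exactly the chain of semantic equivalences just cited, read at $w$; the only point requiring care is that the comparative operators and the relation $\equiv_{\Gamma'}$ depend on their arguments and on $\Gamma'$ only through truth sets, so replacing $\phi$, $\psi$, and the members of $\Gamma$ by their $\mathcal{L}^-$-translations -- which have the same truth sets by the induction hypothesis -- leaves every relevant truth set, and hence the value of the modality, unchanged. Truth preservation yields validity preservation: $\alpha \in \Lambda^{\mathcal{L}_\CP}_\mathfrak{C}$ under $\mathbf{X}$ iff $\tr_\mathbf{X}(\alpha) \in \Lambda^{\mathcal{L}}_\mathfrak{C}$.

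Finally I would take as the axiomatisation of $\Lambda^{\mathcal{L}_\CP}_\mathfrak{C}$ the axioms used for $\Lambda^{\mathcal{L}}_\mathfrak{C}$ (the VC axioms together with the reductions of Lemmas \ref{lemma:expressible2}, \ref{lem5}, \ref{lem6}) augmented by the reduction axioms for the chosen $\mathbf{X}$ from Lemmas \ref{definable}, \ref{definable2}, \ref{definable3}, plus a rule of replacement of provable equivalents. Soundness is immediate, as every added axiom is a valid equivalence. For completeness, let $\alpha$ be $\mathbf{X}$-valid. Using the reduction axioms and replacement, working inside-out exactly as in the definition of $\tr_\mathbf{X}$, one derives $\vdash \alpha \leftrightarrow \tr_\mathbf{X}(\alpha)$. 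By validity preservation $\tr_\mathbf{X}(\alpha)$ is $\mathcal{L}$-valid, hence derivable by the completeness of $\Lambda^{\mathcal{L}}_\mathfrak{C}$, and therefore $\vdash \alpha$. The main obstacle is the treatment of $\Gamma$: because a ceteris paribus clause may itself contain ceteris paribus conditionals, the translation must recurse into $\Gamma$, and one must check both that this recursion is well-founded (which Definition \ref{fulldefn} secures) and that replacement of provable equivalents is licensed inside the $\Gamma$-dependent operators -- that is, that equivalent clauses induce the same $\equiv_\Gamma$, $\unlhd^\Gamma$, $\preceq^\Gamma$, and $\sqsubseteq^\Gamma$. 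Everything else is routine bookkeeping uniform across the three interpretations.
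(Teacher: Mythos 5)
Your proposal is correct and follows essentially the same route as the paper: the corollary is obtained by composing the definability lemmas (Lemmas \ref{definable}--\ref{definable3} to pass from $\mathcal{L}_\CP$ to $\mathcal{L}$, then Lemmas \ref{lemma:expressible2}, \ref{lem5}, \ref{lem6} to reach $\mathcal{L}^-$) into an inside-out translation and transferring completeness from $\Lambda^{\mathcal{L}}_\mathfrak{C}$. You merely make explicit the bookkeeping the paper leaves implicit -- the rank recursion into $\Gamma$, truth-set invariance under replacement of equivalents, and the derivation of $\vdash \alpha \leftrightarrow \tr_\mathbf{X}(\alpha)$.
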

\end{appendix}
\end{document}